\def\mathclap#1{\text{\hbox to 0pt{\hss$\mathsurround=0pt#1$\hss}}}
\newcommand{\N}{\mathbb{N}}
\newcommand{\R}{\mathbb{R}}
\newcommand{\id}{\mathrm{id}}
\newcommand{\Ima}{\mathrm{Im}}
\newcommand{\Sp}{\mathbb{S}}
\newcommand{\ux}{\underline{x}}
\newcommand{\loc}{\mathrm{loc}}
\newcommand{\OG}{\overline{\Gamma}}
\newcommand{\ogk}{\overline{g}_K}
\newcommand{\ogS}{\overline{g}_{\mathbb{S}^d}}
\newcommand{\gr}{\mathrm{graph}}
\newcommand{\rd}{\partial}
\begin{document}

\numberwithin{equation}{section}
\newtheorem{theorem}[equation]{Theorem}
\newtheorem{remark}[equation]{Remark}
\newtheorem{assumption}[equation]{Assumption}
\newtheorem{claim}[equation]{Claim}
\newtheorem{lemma}[equation]{Lemma}
\newtheorem{definition}[equation]{Definition}
\newtheorem{corollary}[equation]{Corollary}
\newtheorem{proposition}[equation]{Proposition}
\newtheorem*{theorem*}{Theorem}
\newtheorem{conjecture}[equation]{Conjecture}
\newtheorem{example}[equation]{Example}

\setcounter{tocdepth}{3}

\title{The $C^0$-inextendibility of a class of FLRW spacetimes}
\author{Jan Sbierski\thanks{School of Mathematics, 
University of Edinburgh,
James Clerk Maxwell Building,
Peter Guthrie Tait Road, 
Edinburgh, 
EH9 3FD,
United Kingdom}}
\date{\today}

\maketitle

\begin{abstract}
This paper studies the singularity structure of FLRW spacetimes without particle horizons at the $C^0$-level of the metric. We  show that in the case of constant spatial curvature $K=+1$, and without any further assumptions on the scale factor, the big bang singularity is sufficiently strong to exclude continuous spacetime extensions to the past. On the other hand it is known that in the case of constant spatial curvature $K=-1$ continuous spacetime extensions through the big-bang  exist for certain choices of scale factor \cite{GalLin16}, giving rise to Milne-like cosmologies. Complementing these results we exhibit a geometric obstruction to continuous spacetime extensions  which is present for a large range of scale factors in the case $K=-1$.
\end{abstract}

\tableofcontents

\section{Introduction}

The study of low-regularity inextendibility properties of spacetime singularities is motivated by the attempt to understand and classify the strength of gravitational singularities on the one hand and, on the other hand, to rule out the possibility that the spacetime can be continued as a weak solution to the Einstein equations. 
We refer the reader also to the detailed discussion in \cite{Sbie22a}. The first continuous inextendibility results were obtained in \cite{Sbie15} for the Minkowski and Schwarzschild spacetimes. Further $C^0$-inextendibility results followed in \cite{GalLin16} for anti de-Sitter spacetime, for general timelike geodesically complete spacetimes in \cite{GalLinSbi17} and \cite{MinSuhr19}, while \cite{ChrusKli12} gave a conditional $C^0$-inextendibility result for expanding singularities. Here, in this paper, we continue the study of $C^0$-inextendibility results of Lorentzian manifolds by turning to another fundamental class of model spacetimes: the  FLRW spacetimes. This class of spacetimes describes homogeneous and isotropic cosmologies which feature a big bang singularity. The main aim of this work is to capture the different singularity structures of the big bang occurring in these models at the $C^0$-level of the metric.


\subsection{The class of FLRW spacetimes and the main theorems}

Let $K \in \{-1,0,1\}$ and consider the simply connected and complete $d$-dimensional Riemannian geometries $(\overline{M}_K, \overline{g}_K)$ of constant curvature $K$. We have $\overline{M}_K = \R^d$ for $K=-1,0$ and $\overline{M}_{+1} = \Sp^d$.\footnote{For $d=1$ this is pure convention. Let us also emphasise that we only consider these topologies for $\overline{M}_K$ in this paper. The choice of topology affects whether the spacetime is future one-connected, a key element in the proofs presented here (cf.\  Section \ref{SecDiscussion}).} The metric in polar normal coordinates is given by
\begin{equation} \label{EqMetricNormal}
\overline{g}_K = \begin{cases} dr^2 + \sinh^2(r) g_{\Sp^{d-1}} \quad &\textnormal{ for } K=-1 \\
dr^2 + r^2 g_{\Sp^{d-1}} \quad &\textnormal{ for } K=0 \\
dr^2 + \sin^2(r) g_{\Sp^{d-1}} \quad &\textnormal{ for } K=+1 
\end{cases}
\end{equation} 
around any point in $\overline{M}_K$, where $g_{\Sp^{d-1}}$ is the round metric on $\Sp^{d-1}$. In the cases of $K=-1,0$ the $r$-coordinate ranges from zero to infinity, while in the case $K=+1$, we have $r \in (0,\pi)$.

The class of $(d+1)$-dimensional FLRW spacetimes consists of Lorentzian manifolds $(M,g)$, where the manifold is of the form $M = (0, \infty) \times \overline{M}_K$ with a metric $g = -dt^2 + a(t)^2\overline{g}_K$. Here, $t$ is the canonical coordinate on the first factor of $M$ and $a: (0,\infty) \to (0,\infty)$ is the scale-factor which satisfies $a(t) \to 0$ for $t \to 0$ and is assumed to be at least twice continuously differentiable so that we have $g \in C^2$.
We define a time orientation by stipulating that $\rd_t$ is future directed. Clearly, FLRW spacetimes as defined above are globally hyperbolic. If the scale factor satisfies $\int\limits_0^1 \frac{1}{a(t)} \, dt = \infty$, we say that the FLRW spacetime does not possess particle horizons. If the integral is finite, it is said to possess particle horizons.

We now recall the definition of a spacetime extension. While the Lorentzian metrics may have limited regularity, the manifolds themselves are always assumed to be smooth, see also Remark 2.2 in \cite{Sbie22a}.
\begin{definition}
Let $(M,g)$ be a  Lorentzian manifold and let $\Gamma$ be a regularity class, for example $\Gamma = C^k$ with $k \in \N \cup \{\infty\}$ or $\Gamma = C^{0,1}_{\loc}$.  A \emph{$\Gamma$-extension of $(M,g)$} consists of a smooth isometric embedding $\iota : M \hookrightarrow \tilde{M}$ of $M$ into a Lorentzian manifold $(\tilde{M}, \tilde{g})$ of the same dimension as $M$ where $\tilde{g}$ is $\Gamma$-regular  and such that $\partial \iota(M) \subseteq \tilde{M}$ is non-empty.

If $(M,g)$ admits a $\Gamma$-extension, then we say that $(M,g)$ is \emph{$\Gamma$-extendible}, otherwise we say $(M,g)$ is \emph{$\Gamma$-inextendible}.
\end{definition}

\begin{definition}\label{DefFutureBoundary}
Let $(M,g)$ be a  time-oriented Lorentzian manifold and $\iota : M \hookrightarrow \tilde{M}$ a $C^0$-extension of $M$. The \emph{future boundary of $M$} is the set $\partial^+\iota(M) $ consisting of all points $\tilde{p} \in \tilde{M}$ such that there exists a smooth timelike curve $\tilde{\gamma} : [-1,0] \to \tilde{M}$ such that $\mathrm{Im}(\tilde{\gamma}|_{[-1,0)}) \subseteq \iota(M)$, $\tilde{\gamma}(0) = \tilde{p} \in \partial \iota(M)$, and $\iota^{-1} \circ \tilde{\gamma}|_{[-1,0)}$ is future directed in $M$.
\end{definition}
Clearly we have $\partial^+\iota(M) \subseteq \partial \iota(M)$. The past boundary $\partial^- \iota(M)$ is defined analogously. 

\begin{definition}
Let $(M,g)$  be a  time-oriented Lorentzian manifold and let $\Gamma$ be a regularity class that is equal to or stronger than $C^0$. A \emph{future $\Gamma$-extension of $(M,g)$} is a $\Gamma$-extension $\iota : M \hookrightarrow \tilde{M}$  of $M$ with $\partial^+\iota(M) \neq \emptyset$. If no such extension exists, then $(M,g)$ is said to be \emph{future $\Gamma$-inextendible}.
\end{definition}
Past $\Gamma$-extensions are defined analogously. We can now state the main results.

\begin{theorem}\label{Thm1}
Let $d \geq 2$. Consider the class of $(d+1)$-dimensional FLRW spacetimes without particle horizons and with $K=+1$ as defined above. Each such spacetime $(M,g)$ is past $C^0$-inextendible.
\end{theorem}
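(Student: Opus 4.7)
The strategy is proof by contradiction, following the general Sbierski template for $C^0$-inextendibility but tailored to exploit the compact spatial topology $K=+1$ together with the no-particle-horizons hypothesis. Suppose $(M,g)$ admits a past $C^0$-extension $\iota:M\hookrightarrow\tilde M$ with past-boundary point $\tilde p\in\partial^-\iota(M)$, witnessed by a past-directed timelike curve $\tilde\gamma:[-1,0]\to\tilde M$, $\tilde\gamma(0)=\tilde p$. Write $\gamma:=\iota^{-1}\circ\tilde\gamma|_{[-1,0)}$, so that $t(\gamma(s))\to 0^+$ as $s\to 0^-$.

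\textbf{Local Minkowski framing.} Choose a smooth chart $(\tilde U,\tilde\phi)$ about $\tilde p$ with $\tilde\phi(\tilde p)=0$ and $\tilde g(\tilde p)=\eta$ the Minkowski metric. By continuity of $\tilde g$ and shrinking $\tilde U$, one sandwiches $\tilde g$ between two constant Minkowski metrics with narrow and wide light cones. The upshot is a trapping principle: for $\delta>0$ small, every $\tilde g$-timelike curve of length $\leq\delta$ ending at a point of $\tilde U$ stays inside $\tilde U$ and, in coordinates, is confined to the Minkowski past diamond of height $O(\delta)$ around its endpoint.

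\textbf{Conformal causal structure.} Introduce conformal time $\tau(t):=-\int_t^{t_*}du/a(u)$. The no-particle-horizons hypothesis gives $\tau\to-\infty$ as $t\to 0^+$, so that $(M,g)$ is conformal to $((-\infty,\tau_*)\times\mathbb{S}^d,\,-d\tau^2+g_{\mathbb{S}^d})$, a portion of the Einstein cylinder. Since causality is conformally invariant and $\mathrm{diam}_{g_{\mathbb{S}^d}}(\mathbb{S}^d)=\pi$, two structural facts follow: (i) for every $p\in M$ and every $\tau'<\tau(p)-\pi$, the entire spatial slice $\{\tau'\}\times\mathbb{S}^d$ lies in $I^-_M(p)$; and (ii) a full $(d-1)$-parameter family of distinct past-directed null geodesics from $p$, one for each initial unit spatial direction in $T_{x(p)}\mathbb{S}^d$, refocuses after conformal time $\pi$ at the antipodal event $p^*:=(\tau(p)-\pi,\,x(p)^*)$.

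\textbf{Pulling the past back into $\tilde U$.} For $s<0$ close to $0$ one has $\tilde\gamma(s)\in\tilde U$; write $p_s:=\gamma(s)$, with $\tau(p_s)\to-\infty$. Pick $\tau_1<\tau(p_s)-\pi$; by (i) the slice $\Sigma:=\{\tau_1\}\times\mathbb{S}^d\subseteq I^-_M(p_s)$ has intrinsic Riemannian diameter $\pi a(\tau_1)\to 0$. Joining each point of $\Sigma$ to $p_s$ in $M$ by a short timelike curve of $g$-length $O(t(p_s))$, the trapping principle gives $\iota(\Sigma)\subseteq\tilde U\cap I^-_{\tilde g}(\tilde\gamma(s))$; in particular every comoving worldline $t\mapsto\iota((t,x))$, $x\in\mathbb{S}^d$, converges to the single point $\tilde p$.

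\textbf{Extracting the contradiction (main obstacle).} The contradiction should come from the focusing statement (ii). Under $\iota$, the $(d-1)$-parameter family of distinct past null geodesics from $p_s$ refocusing at $p_s^*$ becomes a family of geometrically distinct smooth null curves in $\tilde U$ joining $\tilde\gamma(s)$ to $\iota(p_s^*)$, both endpoints lying close to $\tilde p$. In $\tilde U$, where $\tilde g$ is near-flat Minkowski, such a full $(d-1)$-sphere's worth of null connections between two fixed close points is incompatible with the narrow/wide cone sandwich: morally, the past Minkowski light cone of one point meets the future Minkowski light cone of a nearby point only in a thin wedge, not along a spherical family of null paths. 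The heart of the proof, and where I expect the main technical difficulty to lie, is converting this qualitative focusing obstruction into a rigorous quantitative inequality at the $C^0$ level: since $\tilde g$ is only continuous, $\tilde M$ carries no genuine geodesic structure, and one must argue entirely through causal diamonds and the Minkowski-sandwich estimates from the local framing. Both hypotheses are essential here --- $K=+1$ provides the focusing via the finite spatial diameter $\pi$, while no particle horizons ($\tau(p_s)\to-\infty$) guarantees that the focusing event $p_s^*$ sits inside $M$ for $s$ arbitrarily close to $0$, localizing the contradiction in a shrinking neighborhood of $\tilde p$.
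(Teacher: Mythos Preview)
Your proposal has a genuine gap at the ``pulling back'' step, and the final contradiction you sketch does not actually yield one.

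\textbf{The trapping principle fails as stated.} You want to conclude $\iota(\Sigma)\subseteq\tilde U$ by joining each point of $\Sigma$ to $p_s$ in $M$ by a timelike curve of small $g$-length. But small Lorentzian length (proper time) does \emph{not} confine a timelike curve in coordinates: a nearly-null curve has proper time as small as you like while travelling arbitrarily far. Concretely, to reach every point of $\Sigma=\{\tau_1\}\times\mathbb S^d$ from $p_s$ you must traverse spatial distance up to $\pi$ on the sphere, so the connecting curves are not ``short'' in any sense that forces $\iota$ of them to remain in $\tilde U$. Nothing prevents such a curve from leaving $\tilde U$ and re-entering. This is exactly the well-known difficulty in $C^0$-inextendibility arguments: one must \emph{prove} that timelike curves in $M$ between two points whose images lie in the chart are themselves trapped in the chart. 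The paper handles this not with a length estimate but with a homotopy argument: it establishes future one-connectedness of $(M,g)$ (this is where $d\ge 2$ is used) and invokes a lemma saying that a causal homotopy with fixed endpoints, once one representative lies in a causally convex region of $\tilde U$, keeps the whole homotopy inside $\tilde U$. That is the missing mechanism in your outline.

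\textbf{The focusing obstruction is not a contradiction.} Even granting $\iota(p_s),\iota(p_s^*)\in\tilde U$, a $(d-1)$-sphere of null geodesics in $M$ from $p_s$ to $p_s^*$ becomes, under $\iota$, a family of null \emph{curves} in $(\tilde U,\tilde g)$ between two fixed points. There is no obstruction to many null curves joining two points in a near-Minkowski $C^0$ geometry; only null \emph{geodesics} would be essentially unique, and $\tilde g$ has none. So the ``morally, the past light cone meets the future light cone only in a thin wedge'' intuition applies to geodesics, not to arbitrary causal curves, and does not survive at $C^0$.

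\textbf{What the paper does instead.} Having identified (via future one-connectedness) the timelike diamond $\bigcup_{0<s<s_0} I^+(\gamma(s),M)\cap I^-(\gamma(s_0),M)$ with $I^+(\tilde p,\tilde U)\cap I^-(\tilde\gamma(s_0),\tilde U)$, the paper uses the no-particle-horizons and $K=+1$ hypotheses exactly once: to place an entire Cauchy slice $\{t_0\}\times\mathbb S^d$ inside this diamond, hence inside a compact subset of $\tilde U$. Then one simply picks a coordinate line $s\mapsto(s,\underline x_0)$ in $\tilde U$ that avoids this compact diamond; its $\iota^{-1}$-image is a past-inextendible timelike curve in $M$ which reaches $I^+(\{t_0\}\times\mathbb S^d,M)$ but never meets the Cauchy hypersurface $\{t_0\}\times\mathbb S^d$, contradicting global hyperbolicity. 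This is a clean, purely causal contradiction that requires no quantitative focusing estimate.
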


\begin{theorem}\label{Thm2}
Let $d \geq 2$. Consider the class of $(d+1)$-dimensional FLRW spacetimes without particle horizons and with $K=-1$ as defined above. Assume in addition that the scale factor satisfies $a(t) \cdot e^{\int_t^1\frac{1}{a(t')} \, dt'} \to \infty$ for $t \to 0$. Then each such spacetime $(M,g)$ is past $C^0$-inextendible.
\end{theorem}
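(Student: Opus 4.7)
The plan is a proof by contradiction exploiting the fact that, under the given hypothesis, past FLRW light cones develop unbounded hyperbolic transverse size as $t \to 0$, while such cones near a putative boundary point of the extension must be confined to a compact chart. Suppose $\iota : M \hookrightarrow \tilde{M}$ is a past $C^0$-extension with $\tilde{p} \in \partial^-\iota(M)$. Choose a coordinate chart $(U, \psi)$ centered at $\tilde{p}$ in which $\tilde{g}$ is $C^0$-close to a reference Minkowski metric $\eta$ and $\overline{U}$ is compact. Let $\tilde{\gamma}$ be a past-directed $\tilde{g}$-timelike curve in $\tilde{M}$ ending at $\tilde{p}$ with $\tilde{\gamma}([-1,0)) \subseteq \iota(M)$ (provided by the definition of $\partial^-\iota(M)$); set $\gamma := \iota^{-1} \circ \tilde{\gamma}|_{[-1,0)}$, a past-directed curve in $M$ with $t \circ \gamma(s) \to 0$ as $s \to 0^-$, and fix a reference point $q_* \in M$ on $\gamma$ with $t(q_*) \in (0,1)$ chosen so that some past-directed $\tilde{g}$-null geodesic from $\iota(q_*)$ reaches $\tilde{p}$.

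The key geometric invariant is the following. For each $t \in (0, t(q_*))$, the past null cone of $q_*$ in $(M,g)$ meets the slice $\Sigma_t := \{t\} \times \overline{M}_{-1}$ in the hyperbolic sphere $S_t$ of coordinate radius $R(t) := \int_t^{t(q_*)} dt'/a(t')$ centered at $x_{q_*}$, and its proper $(d-1)$-area in the induced FLRW metric on $\Sigma_t$ equals $\omega_{d-1} \bigl[a(t) \sinh R(t)\bigr]^{d-1}$, where $\omega_{d-1}$ denotes the volume of the unit $(d-1)$-sphere. Writing $R(t) = \int_t^1 dt'/a(t') - \int_{t(q_*)}^1 dt'/a(t')$, with the second term a fixed finite constant and the first diverging as $t \to 0$ by the no-particle-horizons hypothesis, the assumption $a(t) \cdot e^{\int_t^1 dt'/a(t')} \to \infty$ immediately yields $a(t) \sinh R(t) \to \infty$. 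Hence the FLRW proper area of $S_t$ diverges as $t \to 0$, and so does that of any fixed positive-angular-measure sub-sphere of $S_t$.

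Under $\iota$, the sphere $S_t$ maps isometrically to a spacelike $(d-1)$-surface $\iota(S_t) \subset \tilde{M}$. The plan is to show that, for $t$ sufficiently small, a uniformly positive-angular-measure sub-sphere of $\iota(S_t)$ lies in $U$, using that past-directed $\tilde{g}$-null geodesics from $\iota(q_*)$ whose directions approximate the adapted null direction towards $\tilde{p}$ remain in $U$ by the near-Minkowski structure of $\tilde{g}$. Since $\tilde{g}$ is $C^0$-close to $\eta$ on the compact set $\overline{U}$, the $\tilde{g}$-area of any subset of a spacelike hypersurface contained in $U$ is uniformly bounded; transported back through the isometry $\iota$, this bounds the FLRW area of the corresponding sub-sphere of $S_t$, contradicting the divergence established above. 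The main technical obstacle is establishing the uniform angular lower bound on the subfamily of past null generators from $\iota(q_*)$ that remain in $U$ as $t \to 0$: this amounts to a focusing-type estimate for the $\tilde{g}$-null congruence emanating from $\iota(q_*)$ near $\tilde{p}$, compatible with the hyperbolic angular structure of $(\overline{M}_{-1}, \overline{g}_{-1})$ pulled back through $\iota$. It is precisely here that the sharpness of the hypothesis enters: Milne-like scale factors $a(t) \sim t$ give bounded transverse size $a(t) \sinh R(t)$ consistent with the explicit Galloway--Ling extensions, whereas $a(t) e^{\int_t^1 dt'/a(t')} \to \infty$ forces a transverse divergence which no angular focusing can absorb.
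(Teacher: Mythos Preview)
Your geometric observation---that the assumption $a(t)e^{\int_t^1 \frac{1}{a(t')}\,dt'}\to\infty$ forces the transverse size $a(t)\sinh R(t)$ of past light cone sections to blow up---is exactly the mechanism the paper exploits (see Proposition~\ref{PropBlowUp}). However, the proposal has a genuine gap at precisely the step you flag as ``the main technical obstacle'', and this step is not a detail but the entire content of the theorem.

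First, for a merely $C^0$ metric $\tilde g$ on $\tilde M$ the geodesic equation is not available, so the phrase ``past-directed $\tilde g$-null geodesic from $\iota(q_*)$ reaches $\tilde p$'' has no meaning, and the proposed choice of $q_*$ is ill-posed. One can only speak of $g$-null geodesics inside $M$ and then of their images under $\iota$. Second, and more seriously, even granting that interpretation there is no reason why a uniformly positive-angular-measure family of past null geodesics from $q_*$ should remain in the chart $U$ all the way down to $t=0$. The ``near-Minkowski'' control you invoke bounds the metric \emph{components} of $\tilde g$ on $U$, but it gives no control on the Christoffel symbols of $g$ on $\iota(M)\cap U$, which generically blow up at the boundary; hence $g$-null geodesics that start close in $U$ can separate arbitrarily before reaching $\gr(f)$ and may leave $U$ through the lateral faces. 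The required ``focusing-type estimate'' would amount to proving that the $g$-null congruence from $q_*$ converges in $\tilde M$---but the whole point is that \emph{a priori} nothing is known about how $\iota$ behaves on this congruence except along the single curve $\gamma$.

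The paper circumvents this by reversing the logic: rather than starting with objects in $M$ and trying to trap them in $\tilde U$, it constructs two past-inextendible timelike curves $\tilde\gamma_1,\tilde\gamma_2$ \emph{directly in the boundary chart} (this requires first finding a boundary point where the graphing function has a spacelike tangent direction, Proposition~\ref{PropSpacelikeB}, which uses $d\ge 2$), and then uses the future one-connectedness of $(M,g)$ together with the classification of TIFs (Proposition~\ref{PropTIFs}) to deduce that the corresponding curves $\gamma_1,\gamma_2$ in $M$ have \emph{distinct} asymptotic angular limits $\omega_1\neq\omega_2$. The blow-up is then formulated as a divergence of the induced Riemannian distance $d_{\Sigma_{t_0}\setminus J^-(r,M)}(\gamma_1(t_0),\gamma_2(t_0))$, and this is contradicted by constructing, inside the chart, a uniformly short curve in $\Sigma_{t_0}\setminus J^-(r,M)$ joining the two points. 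The exclusion of $J^-(r,M)$ is essential (otherwise the distance tends to zero) and requires an additional causality argument (Proposition~\ref{PropIntNG} combined with Proposition~\ref{PropNGI}) to ensure that the connecting curve in the chart really avoids $J^-(r,M)$ as a subset of $M$. None of this machinery is visible in your sketch; the step you label a ``technical obstacle'' is where all of it lives.
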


\subsection{Previous results and remarks concerning the main theorems}

It is well-known that by computing scalar curvature quantities one can establish the past $C^2$-inextendibility of a large range of FLRW spacetimes. However, not much is known about the low-regularity\footnote{By low-regularity we mean in this paper any regularity between $C^0$ and $C^1$.} inextendibility properties of this class of spacetimes; the only available result establishes the past $C^{0,1}_{\mathrm{loc}}$-inextendibility of FLRW spacetimes \emph{with particle horizons} for $d \geq 1$, \cite{Sbie22a}. 

On the other hand there are a couple of \emph{$C^0$-extendibility} results available. \textbf{Firstly}, in dimension $d=1$ all FLRW spacetimes are past $C^0$-extendible, which also shows that the assumption $d \geq 2$ is crucial in our main theorems. To see this, let $M = (0, \infty) \times \R$ with Lorentzian metric $g = -dt^2 + a(t)^2 \, dx^2$. Assume $a(t) \to 0$ for $t \to 0$. We introduce the null coordinates $v := x - \int\limits_t^1 \frac{1}{a(t')} \, dt'$ and $u:= x + \int\limits_t^1 \frac{1}{a(t')} \, dt'$ and also a new time-coordinate $\tau := \int\limits_0^t a(t') \, dt'$. In $(\tau,v)$ coordinates the metric $g$ takes the form $$g = a\big(t(\tau)\big)^2 \,dv^2 - [dv \otimes d\tau + d \tau \otimes dv]\;,$$
while in $(\tau,u)$ coordinates the metric $g$ takes the form $$g = a\big(t(\tau)\big)^2 \, du^2 + [du \otimes d \tau + d \tau \otimes du] \;.$$
The differentiable structure of each of these coordinates furnishes a continuous extension beyond $\tau = 0$.\footnote{This has been observed in Section 3.2.\ of \cite{GalLin16}.} In the case of particle horizons, these two continuous extensions terminate the same past directed causal curves, see the discussion in Section 6 of \cite{Sbie22b}. In the case of no particle horizons, they terminate a disjoint set of past directed causal curves. We can also replace the second factor in $M$ by $\Sp^1$ and then define the null coordinates $u$ and $v$ modulo $2\pi$. This still gives us two $C^0$-extensions.

\textbf{Secondly}, Galloway and Ling \cite{GalLin16} showed for $d \geq 2$ that if $(M,g)$ is an FLRW spacetime with $K=-1$ and without particle horizons, where the scale factor satisfies in addition 
\begin{equation} \label{EqGalLing}
\lim_{t \to 0} \dot{a}(t) = 1 \quad \textnormal{ and } \quad \lim_{t \to 0} a(t) \exp(\int_t^1 \frac{1}{a(t')} \, dt') \in (0, \infty) \;,
\end{equation}
then $(M,g)$ is past $C^0$-extendible. The authors called such spacetimes \emph{Milne-like spacetimes}. Scale factors of the form $a(t) = t + t^{1 + \varepsilon}$, with $\varepsilon>0$, for example satisfy the assumptions \eqref{EqGalLing}, \cite{Ling20}. On the other hand, scale factors of the form $a(t) = t^{1 + \varepsilon}$ satisfy the assumptions of Theorem \ref{Thm2}. Note that while Theorem \ref{Thm2} complements the $C^0$-extendibility criterion of Galloway and Ling, it leaves open the regime of scale factors for which $a(t) \exp(\int_t^1 \frac{1}{a(t')} \, dt')$ is bounded but does not converge for $t \to 0$.

So far we have only discussed the past-extendibility properties of FLRW spacetimes through the big-bang singularity. It is shown in Theorem 3.5 of \cite{Sbie22a} that for  scale factors satisfying $\int\limits_1^\infty \frac{a(t)}{\sqrt{a(t)^2 +1}} \, dt = \infty$ the FLRW spacetime is future timelike geodesically complete, which implies the future $C^0$-inextendibility by \cite{GalLinSbi17}.\footnote{See also Corollary 2.10 in \cite{GalLin16} for a related result.} Note that if a Lorentzian manifold is future and past $C^0$-inextendible, then it is (globally) $C^0$-inextendible. This follows from the following lemma, which is a reformulation of Lemma 2.17 in \cite{Sbie15}:
\begin{lemma} \label{LemFuturePastExt}
Let $(M,g)$ be a  time-oriented Lorentzian manifold and $\iota : M \hookrightarrow \tilde{M}$ a $C^0$-extension of $M$. Then $\partial^+\iota(M) \cup \partial^- \iota(M) \neq \emptyset$.
\end{lemma}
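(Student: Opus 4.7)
The plan is to argue by contrapositive: assume $\partial^+\iota(M) \cup \partial^-\iota(M) = \emptyset$ and derive a contradiction with $\partial \iota(M) \neq \emptyset$, which holds by the definition of a $C^0$-extension. Note first that $\iota(M) \subseteq \tilde{M}$ is open, since $\iota$ is a smooth embedding between manifolds of the same dimension.

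The key preliminary step is to show that, under the standing assumption, $\iota(M)$ is invariant under taking chronological futures and pasts: for every $\tilde{q} \in \iota(M)$, $I^+(\tilde{q}) \cup I^-(\tilde{q}) \subseteq \iota(M)$. Suppose instead that some $\tilde{s} \in I^+(\tilde{q})$ failed to lie in $\iota(M)$, and pick a smooth future-directed timelike curve $\alpha \colon [0,1] \to \tilde{M}$ with $\alpha(0) = \tilde{q}$, $\alpha(1) = \tilde{s}$. Openness of $\iota(M)$ combined with $\alpha(1) \notin \iota(M)$ implies that $s^* := \sup\{s : \alpha([0,s]) \subseteq \iota(M)\}$ satisfies $0 < s^* \leq 1$ and $\alpha(s^*) \in \partial \iota(M)$. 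Reparameterising $\alpha|_{[0, s^*]}$ to $[-1, 0]$ then produces a smooth timelike curve whose image up to its endpoint lies in $\iota(M)$, and which witnesses $\alpha(s^*) \in \partial^+\iota(M) \cup \partial^-\iota(M)$ depending on whether $\iota$ preserves or reverses time-orientation: a contradiction either way. The argument for $I^-(\tilde{q})$ is entirely symmetric.

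To derive the final contradiction, fix any $\tilde{p} \in \partial \iota(M)$ and choose local coordinates $(x^0, \ldots, x^d)$ on a neighbourhood $U$ of $\tilde{p}$, with $\tilde{p}$ at the origin and $\tilde{g}|_{\tilde{p}}$ equal to the Minkowski metric $\eta$; shrinking $U$, we may assume that every coordinate straight line with $\eta$-timelike tangent is also $\tilde{g}$-timelike. Since $\tilde{p} \in \overline{\iota(M)}$, pick a sequence $\tilde{q}_n \in \iota(M) \cap U$ with $\tilde{q}_n \to \tilde{p}$. For any $\tilde{r} \in U$ in the coordinate past of $\tilde{p}$, the straight-line segment from $\tilde{q}_n$ to $\tilde{r}$ has tangent close to the (past-timelike) segment from $\tilde{p}$ to $\tilde{r}$; by openness of the $\tilde{g}$-timelike cone it is $\tilde{g}$-timelike for $n$ large, so $\tilde{r} \in I^-(\tilde{q}_n) \subseteq \iota(M)$ by the invariance claim. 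Hence the whole coordinate past cone of $\tilde{p}$ inside $U$ lies in $\iota(M)$.

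Finally, take the coordinate line $\tilde{\gamma} \colon [-1, 0] \to \tilde{M}$, $\tilde{\gamma}(t) = \tilde{p} + \varepsilon t (1, 0, \ldots, 0)$, with $\varepsilon > 0$ small enough that its image lies in $U$. This is a smooth $\tilde{g}$-timelike curve with $\tilde{\gamma}(0) = \tilde{p} \in \partial \iota(M)$ and $\tilde{\gamma}([-1,0))$ contained in the coordinate past cone of $\tilde{p}$, hence in $\iota(M)$. Its pullback to $M$ is either future- or past-directed, so $\tilde{p} \in \partial^+\iota(M) \cup \partial^-\iota(M)$, contradicting the standing assumption. The main obstacle is the perturbative $C^0$-argument used twice above: one needs that $\tilde{g}$-timelike vectors form open cones which vary continuously with the base point, so that small perturbations of $\tilde{g}$-timelike coordinate lines remain $\tilde{g}$-timelike. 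This is standard for $C^0$-metrics and is made precise by comparing $\tilde{g}$ with the constant Minkowski metric in the chart.
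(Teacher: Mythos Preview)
Your argument is correct and is essentially the proof of Lemma~2.17 in \cite{Sbie15}, which the paper simply cites rather than reproducing. One small point of notation: you write $I^+(\tilde{q})$ and $I^-(\tilde{q})$ in $\tilde{M}$, but the definition of a $C^0$-extension does not require $(\tilde{M},\tilde{g})$ to be time-oriented, so these sets are a priori undefined. This is harmless for your proof, since what you actually use and prove in Step~2 is the unoriented statement that any point connected to $\tilde{q}\in\iota(M)$ by a smooth timelike curve lies in $\iota(M)$; the conclusion $\alpha(s^*)\in\partial^+\iota(M)\cup\partial^-\iota(M)$ then follows because the pullback $\iota^{-1}\circ\alpha|_{[0,s^*)}$ is timelike in the time-oriented manifold $M$ and hence is either future- or past-directed there. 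With that rephrasing your proof goes through verbatim.
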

For this reason we focus solely on past-inextendibility in this paper.

Related to the low-regularity inextendibility results for cosmological spacetimes considered in this paper are the generalisations of Hawking's singularity theorem to Lorentzian manifolds which have a merely $C^1$-regular metric \cite{Gr20}, \cite{KuOhScSt22}.
Synthetic versions of Hawking's singularity theorem are also derived in \cite{AlGrKuSa19} in the setting of Lorentzian length spaces and in \cite{CaMo20} for Lorentzian metric measure spaces. An inextendibility result due to timelike geodesic completeness in the setting of Lorentzian length spaces was obtained in \cite{GraKuSa19}.

\subsection{Discussion of proofs and outline of paper} \label{SecDiscussion}

The proofs of the main theorems are by contradiction and develop further methods originating in the works \cite{Sbie15}, \cite{Sbie18} by the author. Assuming the existence of a past $C^0$-extension, one uses the global hyperbolicity of the FLRW spacetimes to ensure that a past boundary chart exists in which part of the past boundary is given as a Lipschitz graph. This statement, proved in \cite{Sbie18}, is recalled in Proposition \ref{PropBoundaryChart} in Section \ref{SecFundResults}. We also establish the future one-connectedness of the FLRW spacetimes in Proposition \ref{PropFutureOne} in Section \ref{SecFLRW}  (in the case of $K=+1$, it is here that the restriction $d \geq 2$ enters) which is used to relate timelike diamonds in the past boundary chart to those in the original FLRW spacetime. The proof of Proposition \ref{PropFutureOne} follows ideas that were originally developed in \cite{Sbie15}. The main geometric obstruction to the past $C^0$-extendibility in the case of $K=+1$ is captured in Proposition \ref{PropK1} in Section \ref{SecFLRW1}. Crucially using the absence of particle horizons and the compactness of the Cauchy hypersurfaces, we show that a) the timelike future of a past inextendible timelike curve equals the whole spacetime, and b) given any point in the FLRW spacetime, its timelike past contains a whole Cauchy hypersurface of constant time $t$ for $t$ small enough. It is then shown in Section \ref{SecThm1}, using a new causality theoretic argument, that this is indeed a contradiction to the geometric bounds provided by the past boundary chart, which concludes the proof of Theorem \ref{Thm1}. 

The proof of Theorem \ref{Thm2} is more involved. Proposition \ref{PropTIFs} in Section \ref{SecFLRW2} can be seen as parametrising the past causal boundary (\cite{GeKrPe72}) of FLRW spacetimes $(M,g)$ with $K=-1$ and without particle horizons in terms of a null $u$-coordinate and angular coordinates on $\Sp^{d-1}$. It is also shown that given two terminal indecomposable future sets (TIFs) of which neither is contained in the other, then those must be given by finite $u$-coordinates and \emph{different} angular coordinates. Given past inextendible timelike curves $\gamma_1$ and $\gamma_2$, parameterised by the time coordinate $t$ and generating such TIFs, it is shown in Proposition \ref{PropBlowUp} that 
\begin{equation}\label{EqMO}
d_{\Sigma_{t_0} \setminus J^-(r,M)} \big(\gamma_1(t_0), \gamma_2(t_0)\big) \to \infty \quad \textnormal{ for } t_0 \to 0\;, 
\end{equation}
where $\Sigma_{t_0} = \{t = t_0\}$ and $r$ is a point such that $\gamma_1$ and $\gamma_2$ lie in the complement of $J^-(r,M)$. The statement \eqref{EqMO} can be viewed as the main geometric obstruction to past $C^0$-extensions; it can be seen as a geometric consequence of the additional assumption on the scale factor in Theorem \ref{Thm2}. Note that the exclusion of $J^-(r,M)$ in \eqref{EqMO} is essential -- otherwise the distance in $\Sigma_{t_0}$ of the two points $\gamma_1(t_0)$ and $\gamma_2(t_0)$ would go to zero for $t_0 \to 0$. 

In order to lead \eqref{EqMO} together with the assumption of a past $C^0$-extension to a contradiction, we need to refine the past boundary charts from Proposition \ref{PropBoundaryChart}:  we show in Proposition \ref{PropSpacelikeB} that we can find a point at which the past boundary  is differentiable and, using $d \geq 2$, admits a spacelike tangent vector.\footnote{The appendix \ref{SecApp} recalls a few fundamental properties of tangent cones and Lipschitz hypersurfaces needed for this construction.} Starting from this set-up, the proof of Theorem \ref{Thm2} in Section \ref{SecThm2} then constructs past inextendible timelike curves $\gamma_1$ and $\gamma_2$ in this past boundary chart such that neither TIF is contained in the other (here, the future one-connectedness is used again). Moreover, a point $r$ in the past boundary chart is constructed such that $J^-(r,M)$ does not contain $\gamma_1$ nor $\gamma_2$ -- placing us in the situation of \eqref{EqMO}. For this construction it is again essential to relate the causality relations in the past boundary chart with those in the FLRW spacetime. This is done by Proposition \ref{PropIntNG} in Section \ref{SecFundResults} and Proposition \ref{PropNGI} in Section \ref{SecFLRW2}. Using a similar argument as in \cite{Sbie15}, one now constructs a curve in the past boundary chart which connects the points $\gamma_1(t_0)$ and $\gamma_2(t_0)$  in $\Sigma_{t_0} \setminus J^-(r,M)$ and the length of which is uniformly bounded in $t_0$. This derives the contradiction and concludes the proof of Theorem \ref{Thm2}.

Let us point out that no attempt has been made in this paper to state the most general results possible, but the emphasis lies on the introduction of new methods to the study of low-regularity inextendibility problems. For example, the proof of Theorem \ref{Thm1} really only requires the future one-connectedness of the spacetime, the absence of particle horizons, and the compactness of the spacelike geometries.

\subsection*{Acknowledgements}

I would like to acknowledge support through the Royal Society University Research Fellowship URF\textbackslash R1\textbackslash 211216. I am also grateful to two anonymous referees for helpful comments on a previous version of this manuscript.

\section{Properties of general $C^0$-extensions} \label{SecFundResults}

We begin with laying out our conventions regarding Lorentzian causality theory that we use in this paper for Lorentzian manifolds $(M,g)$ with a continuous metric. A \emph{timelike curve} is a piecewise smooth curve which has a timelike tangent everywhere -- and at the points of discontinuity of the tangent the right and left tangent vectors lie in the same connectedness component of the timelike double cone of tangent vectors. Similarly we define a \emph{causal curve} as a piecewise $C^1$ curve which has a causal, non-vanishing tangent everywhere -- and at points of discontinuity of the tangent the right and left tangent vectors lie in the same connectedness component of the causal double cone of tangent vectors with the origin removed. Let $(M,g)$ be in addition time-oriented. For $p \in M$ we denote the \emph{timelike future} of $p$ in $M$ by  $I^+(p,M)$, which is  the set of all points $q \in M$ such that there is a future directed timelike curve  from $p$ to $q$. The \emph{causal future} of $p$ in $M$, denoted by $J^+(p,M)$, is the set which contains $p$ and all points $q \in M$ such that there is a future directed causal curve from $p$ to $q$. The sets $I^-(p,M)$ and $J^-(p,M)$ are defined analogously.

 Note that the past and future boundary, as defined in Definition \ref{DefFutureBoundary},  interchange under a change of time orientation of $(M,g)$. It is thus sufficient to focus in the following on the future boundary.
We recall the following fundamental result which is proved in \cite{Sbie18}, Proposition 2.2.

\begin{proposition}\label{PropBoundaryChart}
Let $\iota : M \hookrightarrow \tilde{M}$ be a $C^0$-extension of a time-oriented globally hyperbolic Lorentzian manifold $(M,g)$, $g \in C^2$, with Cauchy hypersurface $\Sigma$  and let $\tilde{p} \in \partial^+ \iota(M)$. For every $\delta >0$ there exists a chart $\tilde{\varphi} : \tilde{U} \to(-\varepsilon_0, \varepsilon_0) \times  (-\varepsilon_1, \varepsilon_1)^{d} $, $\varepsilon_0, \varepsilon_1 >0$ with the following properties
\begin{enumerate}[i)]
\item $\tilde{p} \in \tilde{U}$ and $\tilde{\varphi}(p) = (0, \ldots, 0)$
\item $|\tilde{g}_{\mu \nu} - m_{\mu \nu}| < \delta$, where $m_{\mu \nu} = \mathrm{diag}(-1, 1, \ldots , 1)$
\item There exists a Lipschitz continuous function $f : (-\varepsilon_1, \varepsilon_1)^d \to (-\varepsilon_0, \varepsilon_0)$ with the following property: 
\begin{equation}\label{PropF1}
\{(x_0,\underline{x}) \in (-\varepsilon_0, \varepsilon_0) \times (-\varepsilon_1, \varepsilon_1)^{d} \; | \: x_0 < f(\underline{x})\} \subseteq \tilde{\varphi} \big( \iota\big(I^+(\Sigma,M)\big)\cap \tilde{U}\big)
\end{equation} and 
\begin{equation}\label{PropF2}
\{(x_0,\underline{x}) \in (-\varepsilon_0, \varepsilon_0) \times (-\varepsilon_1, \varepsilon_1)^{d}  \; | \: x_0 = f(\underline{x})\} \subseteq \tilde{\varphi}\big(\partial^+\iota(M)\cap \tilde{U}\big) \;.
\end{equation}
Moreover, the set on the left hand side of \eqref{PropF2}, i.e. the graph of $f$, is achronal\footnote{With respect to \emph{smooth} timelike curves.} in $(-\varepsilon_0, \varepsilon_0) \times  (-\varepsilon_1, \varepsilon_1)^{d}$.
\end{enumerate}
In particular $\delta>0$ can be chosen so small such that $\tilde{g}_n := -\frac{1}{2} dx_0^2 + dx_1^2 + \ldots + dx_d^2 \prec \tilde{g} \prec -2 dx_0^2 + dx_1^2 + \ldots + dx_d^2 =:\tilde{g}_w $.\footnote{The relation $g_1 \prec g_2$ for two Lorentzian metrics means that if $X$ is a timelike vector with respect to $g_1$, then it is also timelike with respect to $g_2$.} The time orientation on $\tilde{U}$ is fixed by stipulating that $\rd_0$ is future directed. This agrees with the induced time orientation of $M$ below the graph of $f$.
\end{proposition}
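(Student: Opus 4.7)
The plan is to build the chart in two stages: first secure coordinates in which $\tilde{g}$ is close to Minkowski, and then locate the future boundary as a Lipschitz graph inside them. For stage one, I pick a basis of $T_{\tilde{p}}\tilde{M}$ that is orthonormal for $\tilde{g}|_{\tilde{p}}$ with the first vector future-timelike, and extend to a local coordinate system $(x_0, \underline{x})$ centred at $\tilde{p}$. By continuity of $\tilde{g}$, on a sufficiently small box $(-\varepsilon_0, \varepsilon_0) \times (-\varepsilon_1, \varepsilon_1)^d$ one has $|\tilde{g}_{\mu\nu} - m_{\mu\nu}| < \delta$, and for $\delta$ small enough the cone inclusions $\tilde{g}_n \prec \tilde{g} \prec \tilde{g}_w$ hold, and $\rd_0$ is future-directed for $\tilde{g}$. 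A final linear adjustment (still keeping the Minkowski closeness) arranges that the curve $\tilde{\gamma}$ from Definition \ref{DefFutureBoundary} has tangent $\rd_0$ at $\tilde{p}$, which guarantees that in any smaller box, points just below $\tilde{p}$ on the vertical axis come from $\iota(M)$.

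For stage two, for each $\underline{x} \in (-\varepsilon_1, \varepsilon_1)^d$ I define
\[
f(\underline{x}) := \sup \bigl\{ x_0 \in (-\varepsilon_0, \varepsilon_0) : (x_0, \underline{x}) \in \tilde{\varphi}\bigl(\iota(I^+(\Sigma,M)) \cap \tilde{U}\bigr) \bigr\}.
\]
Well-definedness of $f$ on a sufficiently small box follows because the curve $\tilde{\gamma}$ together with openness of $\iota(I^+(\Sigma,M))$ guarantees that each vertical segment meets $\tilde{\varphi}(\iota(I^+(\Sigma,M)))$, provided $\varepsilon_1$ is small relative to $\varepsilon_0$. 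The finiteness (and strict bound by $\varepsilon_0$) uses the narrow cone $\tilde{g}_n$: if $f(\underline{x})$ approached $\varepsilon_0$, one could run a $\tilde{g}_n$-timelike vertical segment backwards from near $(\varepsilon_0, \underline{x})$ and deduce by $\tilde{g}_n \prec \tilde{g}$ that $\tilde{p}$ is interior to $\iota(M)$, contradicting $\tilde{p} \in \partial \iota(M)$.

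Next I would prove the Lipschitz bound by cone-comparison. Fix $\underline{x}_1, \underline{x}_2$ and let $C$ be any slope strictly exceeding the slope of $\tilde{g}_w$-null vectors; suppose $f(\underline{x}_2) - f(\underline{x}_1) > C|\underline{x}_2 - \underline{x}_1|$. Pick $x_0^{(1)} < f(\underline{x}_1)$ with $(x_0^{(1)}, \underline{x}_1) \in \tilde{\varphi}(\iota(I^+(\Sigma,M)))$ and still $x_0^{(1)} - f(\underline{x}_2) < -C|\underline{x}_2 - \underline{x}_1|$. Then the straight segment from $(x_0^{(1)}, \underline{x}_1)$ to a point slightly above $(f(\underline{x}_2), \underline{x}_2)$ has tangent inside the $\tilde{g}_w$-future cone, hence is $\tilde{g}$-timelike by $\tilde{g} \prec \tilde{g}_w$. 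Pulling this segment back through $\iota^{-1}$ and using the openness of $I^+(\Sigma, M)$ together with local extendibility of smooth timelike curves under $C^0$-metric perturbation, I obtain points of $\iota(I^+(\Sigma,M))$ strictly above $f(\underline{x}_2)$, contradicting the supremum definition. Symmetry yields $|f(\underline{x}_1) - f(\underline{x}_2)| \le C|\underline{x}_1 - \underline{x}_2|$.

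Finally, the subgraph inclusion \eqref{PropF1} follows from the definition of $f$ together with a $\tilde{g}_n$-vertical descent argument showing chronological accessibility of every subgraph point from a point arbitrarily close to the graph. The graph inclusion \eqref{PropF2} follows because each graph point is a limit of subgraph points (hence in $\partial \iota(M)$) and is reached from below by the vertical $\tilde{g}$-timelike segment, which supplies the smooth future-directed timelike curve required by Definition \ref{DefFutureBoundary}, so the graph lies in $\partial^+ \iota(M)$. Achronality of the graph with respect to smooth timelike curves in $(-\varepsilon_0, \varepsilon_0) \times (-\varepsilon_1, \varepsilon_1)^d$ is a variant of the Lipschitz argument: a smooth timelike connection between two graph points could be perturbed to start at a subgraph point (in $\iota(I^+(\Sigma,M))$) and end strictly above the graph, again contradicting the supremum. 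The main obstacle I anticipate is the careful bookkeeping of perturbation arguments so that each timelike curve is produced in the correct class (smooth, piecewise-$C^1$, $\tilde{g}$- or $\tilde{g}_w$-timelike) and so that the passage between the graph in $\tilde{M}$ and causal structure in $M$ via $\iota^{-1}$ is consistently justified using openness of $I^+(\Sigma, M)$ and the narrow-cone inclusion $\tilde{g}_n \prec \tilde{g}$.
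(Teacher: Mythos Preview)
The paper does not give its own proof of this proposition; it is quoted from \cite{Sbie18}, Proposition 2.2. Your two-stage outline (near-Minkowski chart, then $f$ as a supremum over vertical lines) is the standard route, but the Lipschitz step contains a genuine gap in two related places. First, the cone implication is reversed: $\tilde{g} \prec \tilde{g}_w$ means $\tilde{g}$-timelike $\Rightarrow$ $\tilde{g}_w$-timelike, not the converse, so to force a segment to be $\tilde{g}$-timelike from a slope bound you must place its tangent inside the \emph{narrow} cone and use $\tilde{g}_n \prec \tilde{g}$. Second, and more seriously, the argument runs in the wrong temporal direction. You start at $(x_0^{(1)}, \underline{x}_1) \in \iota(I^+(\Sigma, M))$ and follow a \emph{future}-directed timelike segment to a point above the graph at $\underline{x}_2$, then want the endpoint to lie in $\iota(I^+(\Sigma, M))$. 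But a future-directed timelike curve from $\iota(I^+(\Sigma, M))$ can simply leave $\iota(M)$ --- that is precisely what happens at the future boundary --- so nothing prevents the segment from exiting $\iota(M)$ en route; openness of $I^+(\Sigma, M)$ and ``local extendibility'' do not address this.

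The argument that works goes \emph{past}-directed: start just below the graph at $\underline{x}_2$ (a point of $\iota(I^+(\Sigma, M))$ by the supremum definition) and run a past-directed $\tilde{g}$-timelike segment towards the vertical over $\underline{x}_1$. Now global hyperbolicity is the decisive input: if this segment left $\iota(I^+(\Sigma, M))$, the portion in $M$ would be a past-inextendible timelike curve from $I^+(\Sigma, M)$ that fails to meet the Cauchy hypersurface $\Sigma$, a contradiction (this is exactly the mechanism spelled out in the paragraph immediately following the proposition). Hence the endpoint over $\underline{x}_1$ is still in $\iota(I^+(\Sigma, M))$, giving $f(\underline{x}_1) \geq f(\underline{x}_2) - C|\underline{x}_1 - \underline{x}_2|$. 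The subgraph inclusion \eqref{PropF1} and the achronality of $\gr(f)$ rest on the same past-directed reasoning; your sketches of those steps likewise never invoke the Cauchy property of $\Sigma$, and that is the missing ingredient throughout.
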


We also introduce the abbreviations $\tilde{U}_{<} := \{x_0 < f(\underline{x})\}$ and $\tilde{U}_{\leq} := \{x_0 \leq f(\underline{x})\}$. 
A chart as in Proposition \ref{PropBoundaryChart} is called a \emph{future boundary chart} around $\tilde{p}$.
Note that 
\begin{equation} \label{EqAddPropBC}
\textnormal{any past directed causal curve in $\tilde{U}$ starting below the graph of $f$ remains below the graph of $f$.}
\end{equation}
Otherwise, if it were to cross the graph of $f$, it would give rise, via $\iota^{-1}$,  to a past directed past inextendible causal curve in $M$ which starts in $I^+(\Sigma,M)$ but does not intersect $\Sigma$ -- which contradicts $\Sigma$ being a Cauchy hypersurface.

It also follows directly from the inclusion relation of the light cones that in the setting of Proposition \ref{PropBoundaryChart} we have $I^+_{\tilde{g}_n}(\tilde{q}, \tilde{U}) \subseteq I^+_{\tilde{g}}(\tilde{q}, \tilde{U}) \subseteq I^+_{\tilde{g}_w}(\tilde{q}, \tilde{U})$ for any $\tilde{q} \in \tilde{U}$ -- and similarly for the timelike past and causal future/past.

The next proposition refines Proposition \ref{PropBoundaryChart} by showing that in at least three spacetime dimensions one can find a future boundary point such that the graph of $f$ has a spacelike direction at this point. Note that the future boundary can of course be null everywhere which shows that the result cannot hold in two spacetime dimensions. This refinement of Proposition \ref{PropBoundaryChart} is needed for the proof of Theorem \ref{Thm2}.

\begin{proposition} \label{PropSpacelikeB}
Let $(M,g)$ be a  $(d+1)$-dimensional time oriented and globally hyperbolic Lorentzian manifold $(M,g)$ with $g \in C^2$, $\iota : M \hookrightarrow \tilde{M}$ a $C^0$-extension, and $\rd^+ \iota(M) \neq \emptyset$. Let $d \geq 2$. Then there exist a $\tilde{p} \in \rd^+ \iota (M)$, a future boundary chart $\tilde{\varphi} : \tilde{U} \to(-\varepsilon_0, \varepsilon_0) \times  (-\varepsilon_1, \varepsilon_1)^{d}$ around $\tilde{p}$ as in Proposition \ref{PropBoundaryChart} such that $f : (-\varepsilon_1, \varepsilon_1)^d \to (-\varepsilon_0, \varepsilon_0)$ is differentiable at $0$ with $df(0) (\rd_1) = 0$.
\end{proposition}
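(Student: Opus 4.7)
The plan is to start from an arbitrary future boundary chart $\tilde\varphi : \tilde U \to (-\varepsilon_0,\varepsilon_0) \times (-\varepsilon_1,\varepsilon_1)^d$ around some $\tilde q \in \rd^+\iota(M)$ provided by Proposition \ref{PropBoundaryChart}, with Lipschitz graph function $f_0$, and then to translate and orthogonally rotate the chart so that the new origin sits at a point of differentiability of $f$ and the new $\rd_1$ direction lies in the kernel of the differential. The existence of differentiability points is supplied by Rademacher's theorem: $f_0$ is differentiable at almost every point of $(-\varepsilon_1,\varepsilon_1)^d$, so any such point $\underline x_0$ in the open box is a valid starting choice.

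Since $d \geq 2$, the linear functional $df_0(\underline x_0) : \R^d \to \R$ has kernel of dimension at least $d-1 \geq 1$, so there is a unit Euclidean vector $v \in \R^d$ with $df_0(\underline x_0)(v) = 0$. Choose $A \in O(d)$ with $A e_1 = v$ and define the adjusted chart
\[
\tilde\varphi'(\tilde r) := \bigl(x_0 - f_0(\underline x_0),\; A^{-1}(\underline x - \underline x_0)\bigr), \qquad (x_0,\underline x) = \tilde\varphi(\tilde r),
\]
on a neighbourhood $\tilde U' \subseteq \tilde U$ whose image is a box $(-\varepsilon_0', \varepsilon_0') \times (-\varepsilon_1', \varepsilon_1')^d$ contained in $\tilde\varphi(\tilde U)$. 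Take $\tilde p := \tilde\varphi^{-1}(f_0(\underline x_0), \underline x_0) \in \rd^+\iota(M)$, which now corresponds to the new origin.

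In this new chart the boundary graph is the graph of $f(\underline y) := f_0(A\underline y + \underline x_0) - f_0(\underline x_0)$, which has the same Lipschitz constant as $f_0$, vanishes at $0$, is differentiable there, and satisfies $df(0)(\rd_1) = df_0(\underline x_0)(A e_1) = df_0(\underline x_0)(v) = 0$. The remaining conditions of Proposition \ref{PropBoundaryChart} transfer almost tautologically: the Jacobian of the coordinate change is the block matrix $\mathrm{diag}(1, A)$, which preserves the Minkowski background $m_{\mu\nu}$ exactly, so the bound $|\tilde g_{\mu\nu} - m_{\mu\nu}| < \delta$ still holds in the new chart provided one initially invokes Proposition \ref{PropBoundaryChart} with $\delta$ reduced by a dimensional factor $(d+1)^2$ to absorb the componentwise blowup under an orthogonal change of frame. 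Achronality of the new graph and the inclusions \eqref{PropF1}--\eqref{PropF2} are preserved because the underlying subsets of $\tilde U$ are simply being relabelled by $\tilde\varphi'$.

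The only mildly delicate point is bookkeeping the metric closeness under the rotation, which is resolved by shrinking $\delta$ from the outset. The substantive role of the hypothesis $d \geq 2$ is precisely to guarantee $\dim \ker df_0(\underline x_0) \geq 1$ at a differentiability point; in $d = 1$ the kernel can be trivial, in line with the remark preceding the proposition that the future boundary may be null everywhere in two spacetime dimensions.
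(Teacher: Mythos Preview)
Your argument is correct and in fact more direct than the paper's. Both proofs start the same way: pick any future boundary chart, use Rademacher to find a differentiability point $\underline{x}_0$ of the graphing function, and note that $d\ge 2$ forces $\ker df_0(\underline{x}_0)\neq\{0\}$. The divergence is in how the new chart is produced. The paper chooses a $\tilde g$-orthonormal frame $e_\mu$ at $\tilde p$ with $e_1$ tangent to the boundary, builds an affine chart adapted to this frame, and then has to re-run the construction of Proposition~2.2 in \cite{Sbie18} to verify the future-boundary-chart axioms; because this affine change mixes the time and space coordinates, the paper also needs the appendix machinery on tangent cones (Lemma~\ref{LemLip}) to transport differentiability of the graph from one chart to the other. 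Your block-diagonal change $\mathrm{diag}(1,A)$ with $A\in O(d)$ avoids all of this: the graph stays a graph over the new spatial slice by an explicit formula, differentiability at $0$ is the chain rule, and properties \eqref{PropF1}--\eqref{PropF2} together with achronality are literally the old ones read in new coordinates on a sub-box. The only price is the componentwise $(d+1)^2$ loss in the metric-closeness bound, which you correctly absorb by invoking Proposition~\ref{PropBoundaryChart} with a smaller $\delta$ at the outset. What the paper's route buys is that $\tilde g$ is exactly Minkowski at $\tilde p$ in the new chart, but that is not used in the statement or downstream; your approach is the cleaner one here.
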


\begin{proof}
By assumption, there is a $\tilde{q} \in \rd^+ \iota(M)$. By Proposition \ref{PropBoundaryChart} we can find a future boundary chart $\tilde{\psi} : \tilde{V} \to (-\delta_0, \delta_0) \times (-\delta_1, \delta_1)^d$ with Lipschitz continuous graphing function $h : (-\delta_1, \delta_1)^d \to (-\delta_0, \delta_0)$. By Rademacher's theorem, $h$ is differentiable almost everywhere. Let $\underline{y}_o \in (-\delta_1, \delta_1)^d$ be a point where $dh$ exists. We set $H(y) :=y_0 - h(\underline{y})$, define $\tilde{p} := \tilde{\psi}^{-1} \big(h(\underline{y}_o), \underline{y}_o\big)$ and $\tilde{B} := \tilde{\psi}^{-1}(\mathrm{graph} (h))$. Then $H$ is differentiable at $\tilde{p}$ and we have $\mathrm{ker} (dH|_{\tilde{p}}) = T_{\tilde{p}}\tilde{B}$.\footnote{Indeed, one can show that the global hyperbolicity of $M$ implies that $dH|_{\tilde{p}}$ has to be null or timelike, but this is not needed for the proof.} Since we have assumed $d \geq 2$ there exists a spacelike unit vector $e_1 \in T_{\tilde{p}} \tilde{V}$ with $dH(e_1) = 0$.  We complement $e_1$ to an ONB $e_\mu$ at $\tilde{p}$ such that $e_0$ is future directed timelike. By an affine change of coordinates we find a smooth chart $\tilde{\varphi} : \tilde{V} \supseteq \tilde{U} \to(-\varepsilon_0, \varepsilon_0) \times  (-\varepsilon_1, \varepsilon_1)^{d}$ which is centred at $\tilde{p}$ and such that we have $\frac{\rd}{\rd x^\mu} = e_\mu$ at $\tilde{p}$. After making $\varepsilon_0>0$ smaller if necessary it follows that the smooth curve $ (-\varepsilon_0 ,0] \ni s \overset{\tilde{\gamma}}{\mapsto} \tilde{\varphi}^{-1}(s,\underline{0})$ is timelike. Moreover, with respect to the first chart $\tilde{\psi}$, it maps into $\tilde{V}_<$ if the right endpoint is deleted\footnote{Suppose it did not and we could find an $s_0 \in (-\varepsilon_0,0)$ with $\tilde{\gamma}(s_0) \in \{y_0 \geq h(\underline{y})\}$. Concatenating  $\tilde{\gamma}|_{[s_0,0]}$ to the left with a curve just moving backwards in $y_0$ gives then a piecewise smooth timelike curve connecting points on the graph of $h$, in contradiction to its achronality.} -- and thus in particular into $\iota(M)$. We now follow line by line the proof of Proposition 2.2 in \cite{Sbie18} which shows that after making $\varepsilon_0, \varepsilon_1 >0$ smaller if necessary the chart $\tilde{\varphi}$ becomes a future boundary chart with graphing function $f$ as in Proposition \ref{PropBoundaryChart}. 

 Note that  the curve $\tilde{\gamma}$ lies below the respective graphs in both charts and thus, by continuity, $\tilde{\psi} \circ \tilde{\varphi}^{-1} $  maps $\tilde{U}_<$ into $\tilde{V}_<$. Since every point on $\mathrm{graph}(f)$ is the limit point of a timelike curve in $\tilde{U}_<$ whose image in $\tilde{V}_<$ can only have a limit point on $\mathrm{graph}(h)$, it follows that also  $\tilde{U}_{\leq}$ is mapped into $\tilde{V}_{\leq}$. Finally, since $\tilde{\psi} \circ \tilde{\varphi}^{-1} $ is a diffeomorphism, it follows that $\tilde{\varphi}(\tilde{B} \cap \tilde{U}) = \mathrm{graph} (f)$. 
Lemma \ref{LemLip} implies that $f$ is differentiable at $0$ and we have $T_{\tilde{p}}\tilde{B} \ni e_1 = \frac{\rd}{\rd y_1} = d(f, \mathrm{id}_{\R^d} )|_0 \big( \frac{\rd}{\rd y_1}\big)$, which implies $df|_0(\rd_1) = 0$. (Here, $(f,\mathrm{id}_{\R^d})(\underline{x}) = (f(\underline{x}), \underline{x})$.) This concludes the proof.
\end{proof}

The next proposition gives a criterion which rules out that for $\tilde{r}, \tilde{s} \in \tilde{U}_<$ with $\tilde{r}$ in the future of $\tilde{s}$, a future directed timelike curve in $M$, starting at $\iota^{-1}(\tilde{r})$, leaves the chart $\tilde{U}$ and then enters again the future of $\tilde{s}$ in $\tilde{U}_<$ by passing through the light cone of $\tilde{s}$ in $\tilde{U}_<$.
\begin{proposition} \label{PropIntNG}
Let $(M,g)$ be a time-oriented globally hyperbolic Lorentzian manifold with $g \in C^2$. Let $\iota : M \hookrightarrow \tilde{M}$ be a $C^0$-extension, $\tilde{p} \in \rd^+ \iota(M)$, and $\tilde{\varphi} : \tilde{U} \to(-\varepsilon_0, \varepsilon_0) \times  (-\varepsilon_1, \varepsilon_1)^{d}$ a future boundary chart around $\tilde{p}$. Furthermore, let $\tilde{s}, \tilde{r}\in \tilde{U}_<$ with $\tilde{r} \in I^+(\tilde{s}, \tilde{U})$ and let $\tilde{t} \in I^+(\tilde{s}, \tilde{U})$ be such that $J^+_{\tilde{g}_w}(\tilde{s}, \tilde{U}) \cap J^-_{\tilde{g}_w}(\tilde{t}, \tilde{U})$ is compact in $\tilde{U}$. Assume that $s := \iota^{-1}(\tilde{s})$, $r:= \iota^{-1}(\tilde{r}) \in M$ are such that none of the future directed null geodesics in $M$ emanating from $s$ intersect any of those emanating from $r$. Then 
\begin{equation*}
\iota\big(J^+(r,M)\big) \cap  \Big( \big[ J^+(\tilde{s}, \tilde{U}_<) \cap J^-_{\tilde{g}_w}(\tilde{t}, \tilde{U})  \big]\setminus J^+(\tilde{r}, \tilde{U}_<)\Big) = \emptyset \;.
\end{equation*}
\end{proposition}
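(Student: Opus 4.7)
The plan is a proof by contradiction. Suppose $\tilde q = \iota(q)$ with $q \in J^+(r,M)$ and $\tilde q \in [J^+(\tilde s, \tilde U_<) \cap J^-_{\tilde g_w}(\tilde t, \tilde U)] \setminus J^+(\tilde r, \tilde U_<)$. Fix a future directed $g$-causal curve $\sigma_M : [0,1] \to M$ from $r$ to $q$ and set $\tilde \sigma := \iota \circ \sigma_M$. First I would observe that $\tilde \sigma$ cannot remain in $\tilde U$: otherwise it would stay in $\tilde U_<$ by the standard consequence of the achronality of $\mathrm{graph}(f)$ and the Cauchy property of $\Sigma$ (the remark following Proposition \ref{PropBoundaryChart}), giving $\tilde q \in J^+(\tilde r, \tilde U_<)$ and contradicting the choice of $\tilde q$. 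Setting $\tau^* := \sup\{\tau \in [0,1] : \tilde \sigma(\tau) \notin \tilde U\}$, we then have $\tau^* \in (0,1)$ and $\tilde x := \tilde \sigma(\tau^*) \in \partial \tilde U$; since $\sigma_M(\tau^*) \in M$ we also have $\tilde x \in \iota(M)$, and I write $x := \iota^{-1}(\tilde x)$.

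The sub-curve $\tilde \sigma|_{[\tau^*,1]}$ lies in $\overline{\tilde U}$, is $\tilde g$-causal, and hence $\tilde g_w$-causal; concatenating with a $\tilde g_w$-causal curve from $\tilde q$ to $\tilde t$ in $\tilde U$ yields $\tilde x \in J^-_{\tilde g_w}(\tilde t, \overline{\tilde U})$. The goal is to then show $\tilde x \in J^+_{\tilde g_w}(\tilde s, \overline{\tilde U})$: granted this, $\tilde x$ lies in the $\tilde M$-closure of $J^+_{\tilde g_w}(\tilde s, \tilde U) \cap J^-_{\tilde g_w}(\tilde t, \tilde U)$, which by the compactness hypothesis is itself contained in $\tilde U$, contradicting $\tilde x \in \partial \tilde U$.

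To produce $\tilde x \in J^+_{\tilde g_w}(\tilde s, \overline{\tilde U})$ I would invoke the null-geodesic disjointness in $M$. Since $r \in I^+(s,M)$ and $x \in J^+(r,M)$, in the $C^2$ spacetime $M$ we have $x \in I^+(s,M)$, so there is a $g$-timelike curve from $s$ to $x$. The crux is to choose such a curve whose $\iota$-image stays inside $\overline{\tilde U}$. I would build it by concatenating the $\iota^{-1}$-image of a $\tilde g$-timelike curve in $\tilde U_<$ from $\tilde s$ to $\tilde r$ with a small $g$-timelike perturbation of $\sigma_M|_{[0,\tau^*]}$, exploiting the hypothesis that no future null geodesic from $s$ meets one from $r$ in $M$ to guarantee that $\sigma_M$ runs through the \emph{interior} of $I^+(s,M)$ with quantitative room for such a perturbation, and in particular that the perturbation cannot be forced outside $\overline{\tilde U}$ by proximity to the $\tilde g$-null cone of $\tilde s$.

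The hard part will be precisely this last construction: translating the qualitative null-geodesic disjointness in $M$ into the quantitative statement that a timelike curve from $s$ to $x$ in $M$ can be selected whose $\iota$-image remains within the closed chart domain $\overline{\tilde U}$ and thus certifies $\tilde x \in J^+_{\tilde g_w}(\tilde s, \overline{\tilde U})$. Once this $\tilde g_w$-causal access to $\tilde x$ from $\tilde s$ inside $\overline{\tilde U}$ is in hand, the compactness of the $\tilde g_w$-diamond closes the argument, and the contradiction completes the proof.
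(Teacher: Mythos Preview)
Your approach has a genuine gap at the step you yourself flag as ``hard''. The curve $\sigma_M|_{[0,\tau^*]}$ starts at $r$ and ends at $x$ with $\tilde x \in \partial\tilde U$, but in between its $\iota$-image may wander arbitrarily far from $\overline{\tilde U}$ --- indeed, the whole reason $\tau^*>0$ is that $\tilde\sigma$ leaves $\tilde U$. A ``small $g$-timelike perturbation'' of this curve therefore has no reason to remain in $\overline{\tilde U}$, so it cannot certify $\tilde x \in J^+_{\tilde g_w}(\tilde s,\overline{\tilde U})$. Your invocation of the null-geodesic hypothesis at this point is also misplaced: since $r\in I^+(s,M)$ and $I^+(s,M)$ is open and closed under future causal curves, \emph{every} future causal curve from $r$ already lies in the interior of $I^+(s,M)$; the disjointness assumption gives no additional ``quantitative room'' and says nothing about confinement to $\overline{\tilde U}$.

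The paper's argument is structurally different and never tries to trap the exit point in the diamond. One takes the maximal terminal segment $\tilde\gamma|_{(\tau_0,1]}\subseteq \tilde U_<$ and observes that it lies entirely in $J^-_{\tilde g_w}(\tilde t,\tilde U)$; by the compactness of $J^+_{\tilde g_w}(\tilde s,\tilde U)\cap J^-_{\tilde g_w}(\tilde t,\tilde U)$ it cannot lie entirely in $J^+(\tilde s,\tilde U_<)$ (else it could not approach $\partial\tilde U$ as $\tau\to\tau_0$), so it crosses $\partial J^+(\tilde s,\tilde U_<)$ at some $\tilde\gamma(\tau_1)\in\tilde U_<$. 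Now one works purely inside $\tilde U_<$, where $\tilde g$ is $C^2$: there is a future null geodesic $\tilde\sigma$ from $\tilde s$ to $\tilde\gamma(\tau_1)$. Since $s\notin J^+(r,M)$ while $\gamma(\tau_1)\in J^+(r,M)$, the curve $\iota^{-1}\circ\tilde\sigma$ must meet $\partial J^+(r,M)$ at some point, and that point lies on a future null geodesic from $r$. This is exactly the forbidden intersection of a null geodesic from $s$ with one from $r$. The null-geodesic hypothesis is thus spent at a single interior point of $\tilde U_<$ via the standard $C^2$ causal structure of $M$, not to control curves near $\partial\tilde U$.
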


\begin{proof}
Assume the conclusion does not hold. Then there exists a point $\tilde{z} \in  \big[ J^+(\tilde{s}, \tilde{U}_<) \cap J^-_{\tilde{g}_w}(\tilde{t}, \tilde{U}) \big] \setminus J^+(\tilde{r}, \tilde{U}_<)$ and a future directed causal curve $\gamma : [0,1] \to M$ from $r$ to $z:= \iota^{-1}(\tilde{z})$. Clearly, $\iota \circ \gamma =: \tilde{\gamma}$ cannot be contained in $\tilde{U}_<$ (and not in $\tilde{U}$, since in this case it would need to cross $\rd \iota(M)$) but has to leave $\tilde{U}$ and enter $\tilde{U}_<$ again. Consider the maximal domain of the form $(\tau_0,1]$ such that $\tilde{\gamma}\big((\tau_0,1]\big) \subseteq \tilde{U}$. Since $\tilde{\gamma}(1) = \tilde{z} \in \tilde{U}_<$, $\tilde{\gamma}_{(\tau_0, 1]}$ maps into $\tilde{U}_<$. Since the time orientations on $M$ and $\tilde{U}_<$ agree, it is future directed in $\tilde{U}_<$. Thus in particular $\tilde{\gamma}\big((\tau_0,1]\big) \subseteq J^-_{\tilde{g}_w}(\tilde{z}, \tilde{U}) \subseteq J^-_{\tilde{g}_w}(\tilde{t}, \tilde{U})$. By the  compactness of $J^+_{\tilde{g}_w}(\tilde{s}, \tilde{U}) \cap J^-_{\tilde{g}_w}(\tilde{t}, \tilde{U})$ the curve $\tilde{\gamma}_{(\tau_0,1]}$ has to intersect $\rd J^+(\tilde{s}, \tilde{U}_<)$, say for $\tau_0 < \tau_1 \leq 1$. Since $\tilde{g}|_{\tilde{U}_<} \in C^2$, there exists a future directed null geodesic $\tilde{\sigma} : [0,1] \to \tilde{U}_<$ from $\tilde{s}$ to $\tilde{\gamma}(\tau_1)$. 
\begin{figure}[h]
\centering
 \def\svgwidth{5cm}
\begingroup%
  \makeatletter%
  \providecommand\color[2][]{%
    \errmessage{(Inkscape) Color is used for the text in Inkscape, but the package 'color.sty' is not loaded}%
    \renewcommand\color[2][]{}%
  }%
  \providecommand\transparent[1]{%
    \errmessage{(Inkscape) Transparency is used (non-zero) for the text in Inkscape, but the package 'transparent.sty' is not loaded}%
    \renewcommand\transparent[1]{}%
  }%
  \providecommand\rotatebox[2]{#2}%
  \newcommand*\fsize{\dimexpr\f@size pt\relax}%
  \newcommand*\lineheight[1]{\fontsize{\fsize}{#1\fsize}\selectfont}%
  \ifx\svgwidth\undefined%
    \setlength{\unitlength}{316.89460094bp}%
    \ifx\svgscale\undefined%
      \relax%
    \else%
      \setlength{\unitlength}{\unitlength * \real{\svgscale}}%
    \fi%
  \else%
    \setlength{\unitlength}{\svgwidth}%
  \fi%
  \global\let\svgwidth\undefined%
  \global\let\svgscale\undefined%
  \makeatother%
  \begin{picture}(1,0.81208411)%
    \lineheight{1}%
    \setlength\tabcolsep{0pt}%
    \put(0,0){\includegraphics[width=\unitlength,page=1]{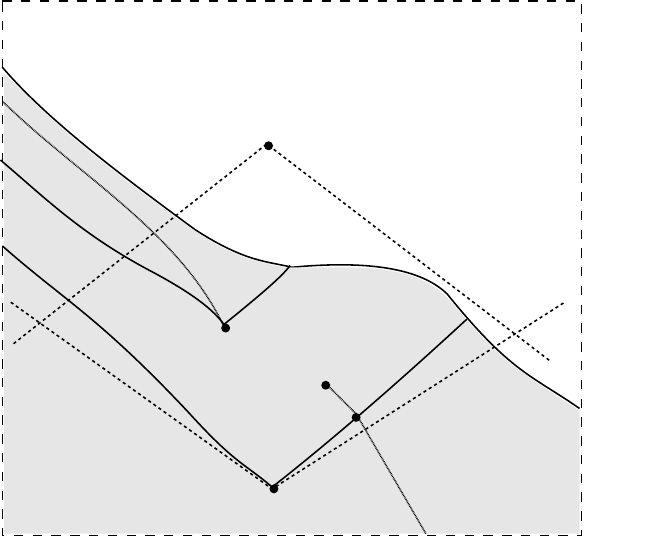}}%
    \put(0.90214959,0.74015832){\color[rgb]{0,0,0}\makebox(0,0)[lt]{\lineheight{1.25}\smash{\begin{tabular}[t]{l}$\tilde{U}$\end{tabular}}}}%
    \put(0.28769216,0.27728499){\color[rgb]{0,0,0}\makebox(0,0)[lt]{\lineheight{1.25}\smash{\begin{tabular}[t]{l}$\tilde{r}$\end{tabular}}}}%
    \put(0.51506842,0.24750948){\color[rgb]{0,0,0}\makebox(0,0)[lt]{\lineheight{1.25}\smash{\begin{tabular}[t]{l}$\tilde{z}$\end{tabular}}}}%
    \put(0.62604973,0.06885669){\color[rgb]{0,0,0}\makebox(0,0)[lt]{\lineheight{1.25}\smash{\begin{tabular}[t]{l}$\tilde{\gamma}$\end{tabular}}}}%
    \put(0.43656951,0.03096064){\color[rgb]{0,0,0}\makebox(0,0)[lt]{\lineheight{1.25}\smash{\begin{tabular}[t]{l}$\tilde{s}$\end{tabular}}}}%
    \put(0.42032831,0.62105627){\color[rgb]{0,0,0}\makebox(0,0)[lt]{\lineheight{1.25}\smash{\begin{tabular}[t]{l}$\tilde{t}$\end{tabular}}}}%
    \put(0.44468997,0.13382135){\color[rgb]{0,0,0}\makebox(0,0)[lt]{\lineheight{1.25}\smash{\begin{tabular}[t]{l}$\tilde{\sigma}$\end{tabular}}}}%
  \end{picture}%
\endgroup%

      \caption{The proof of Proposition \ref{PropIntNG}.} \label{FigProofNG}
\end{figure}

Since $M$ is acausal, we have $s \notin J^+(r,M)$. Thus $\tilde{\sigma}$ connects a point which is not in $\iota\big(J^+(r,M)\big)$ with a point in $\iota \big(J^+(r,M)\big)$. Hence, there is $0<s_0 \leq 1$ such that $\tilde{\sigma}(s_0) \in \iota \big(\rd J^+(r,M)\big)$. But then there exists a future directed null geodesic from $r$ to $\iota^{-1}\big(\tilde{\sigma}(s_0)\big)$, which is a contradiction.
\end{proof}

\begin{definition}
Let $(M,g)$ be a time-oriented Lorentzian manifold with a continuous metric. 
\begin{enumerate}
\item Two future directed timelike curves $\gamma_i : [0,1] \to M$, $i = 0,1$, with $\gamma_0(0) = \gamma_1(0)$ and $\gamma_0(1) = \gamma_1(1)$ are called \emph{timelike homotopic with fixed endpoints} if, and only if, there exists a continuous map $\Gamma : [0,1] \times [0,1] \to M$ such that $\Gamma(t, \cdot)$ is a future directed timelike curve from $\gamma_0(0)$ to $\gamma_0(1)$ for all $t \in [0,1]$  and, moreover, $\Gamma(0, \cdot) = \gamma_0(\cdot)$ and $\Gamma(1,\cdot) = \gamma_1(\cdot)$. The map $\Gamma$ is also called a \emph{timelike homotopy with fixed endpoints between $\gamma_0$ and $\gamma_1$}.
\item We say that $(M,g)$ is \emph{future one-connected} if, and only if, for all $p, q \in M$, any two future directed timelike curves from $p$ to $q$ are timelike homotopic with fixed endpoints.
\end{enumerate}
\end{definition}

The next lemma is found in \cite{Sbie22a}, Lemma 2.12.
\begin{lemma}\label{LemCausalHomotopy}
Let $(M,g)$ be a time-oriented Lorentzian manifold with $g \in C^0$ and let $\iota : M \hookrightarrow \tilde{M}$ be a $C^0$-extension of $M$. Moreover, let $\gamma : [0,1] \to M$ be a future directed timelike curve and let $\tilde{U} \subseteq \tilde{M}$ be an open set. Assume that $\tilde{\gamma} := \iota \circ \gamma$ maps into $\tilde{U}$ and that $J^+(\tilde{\gamma}(0), \tilde{U}) \cap J^-(\tilde{\gamma}(1), \tilde{U}) \subset \subset \tilde{U}$ is compactly contained in $\tilde{U}$.

Let $\Gamma : [0,1] \times [0,1] \to M$ be a causal homotopy of $\gamma$ with fixed endpoints, i.e., \begin{enumerate}
\item $s \mapsto \Gamma(r ; s)$ is a future directed causal curve for all $r \in [0,1]$ with $\Gamma(r;0) = \gamma(0)$ and $\Gamma(r;1) = \gamma(1)$
\item $\Gamma(0;s) = \gamma(s)$ for all $s \in [0,1]$.
\end{enumerate}

Then $\iota \circ \Gamma$ maps into $\tilde{U}$.
\end{lemma}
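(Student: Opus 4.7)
My plan is to prove $\iota \circ \Gamma$ maps into $\tilde{U}$ by a standard connectedness argument in the homotopy parameter. Set
\[
A := \{ r \in [0,1] : \iota \circ \Gamma(r, \cdot) \text{ maps into } \tilde{U}\}.
\]
First, $0 \in A$ because $\Gamma(0, \cdot) = \gamma$ and $\tilde{\gamma} = \iota \circ \gamma$ is assumed to lie in $\tilde{U}$. Since $[0,1]$ is connected and $A$ is non-empty, it suffices to show that $A$ is both open and closed in $[0,1]$.

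The key auxiliary observation is that for every $r \in A$, the curve $\iota \circ \Gamma(r, \cdot)$ is a future directed causal curve in $\tilde{U}$ from $\tilde{\gamma}(0)$ to $\tilde{\gamma}(1)$ — this uses that $\iota$ is a smooth isometric embedding preserving the time orientation and that $s \mapsto \Gamma(r,s)$ is piecewise $C^1$ causal. In particular, its image is contained in $K := J^+(\tilde{\gamma}(0), \tilde{U}) \cap J^-(\tilde{\gamma}(1), \tilde{U})$, and by hypothesis $\overline{K}$ is a compact subset of $\tilde{U}$. Closedness of $A$ then follows from joint continuity of $\Gamma$: if $r_n \in A$ with $r_n \to r_*$, then $\iota \circ \Gamma(r_n, s) \in K$ for all $s \in [0,1]$ and $n \in \mathbb{N}$, and passing to the limit we obtain $\iota \circ \Gamma(r_*, s) \in \overline{K} \subset \tilde{U}$, whence $r_* \in A$.

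For openness, fix $r_* \in A$, so $\iota \circ \Gamma(r_*, [0,1]) \subset K \subset \overline{K}$. Pick an auxiliary Riemannian metric on $\tilde{M}$ and choose an open $\varepsilon$-neighbourhood $V$ of $\overline{K}$ with $\overline{V} \subset \tilde{U}$, which exists since $\overline{K}$ is compact and $\tilde{U}$ is open. By uniform continuity of $\iota \circ \Gamma$ on the compact square $[0,1]^2$, there exists $\delta > 0$ such that $|r - r_*| < \delta$ implies $\iota \circ \Gamma(r, s) \in V \subset \tilde{U}$ for every $s \in [0,1]$, and hence $r \in A$. Combining the two steps gives $A = [0,1]$, as required. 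I do not anticipate a serious obstacle: the only subtlety is the causal-curve remark used for closedness, but it is immediate from the isometric nature of $\iota$ and the definition of causal curves for $C^0$ metrics recalled at the beginning of Section \ref{SecFundResults}.
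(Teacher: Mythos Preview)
The paper does not give its own proof of this lemma; it merely cites \cite{Sbie22a}, Lemma 2.12, so there is no argument to compare against. Your connectedness argument in the homotopy parameter is correct and is the expected line of proof: the closedness step uses exactly the right observation, namely that once $\iota\circ\Gamma(r,\cdot)$ lands in $\tilde{U}$ it is a future directed causal curve there (via the isometric embedding) and hence is trapped in the compact diamond $K$, while openness is a straightforward tube-lemma consequence. If anything, the openness step can be stated without the auxiliary Riemannian metric: $(\iota\circ\Gamma)^{-1}(\tilde{U})$ is open in $[0,1]^2$ and contains the slice $\{r_*\}\times[0,1]$, so by the tube lemma it contains a full strip $(r_*-\delta,r_*+\delta)\times[0,1]$.
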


\section{Properties of FLRW spacetimes without particle horizons} \label{SecFLRW}

\begin{proposition} \label{PropFutureOne}
Let $d \geq 2$. Any $(d+1)$-dimensional FLRW spacetime as defined in the introduction is future one-connected.
\end{proposition}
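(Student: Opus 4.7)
The plan is to reparametrize both timelike curves by the cosmological time coordinate $t$ and to construct a timelike homotopy by interpolating in the spatial factor $(\overline{M}_K, \overline{g}_K)$, in the spirit of the convexity-of-the-timelike-cone argument used in \cite{Sbie15} for Minkowski space.

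\textbf{Reparametrization.} Write $p = (t_p, x_p)$ and $q = (t_q, x_q)$. Since $\partial_t$ is future directed, the $t$-coordinate is strictly increasing along any future directed timelike curve, so both $\gamma_0$ and $\gamma_1$ can be reparametrized by $t \in [t_p, t_q]$. This gives piecewise smooth spatial curves $x_0, x_1 : [t_p, t_q] \to \overline{M}_K$ with common endpoints $x_p, x_q$ and the timelike bound $|\dot x_i(t)|_{\overline{g}_K} < 1/a(t)$. It then suffices to construct a continuous family $H(s, \cdot) : [t_p, t_q] \to \overline{M}_K$ of curves from $x_p$ to $x_q$ with $H(0, \cdot) = x_0$, $H(1, \cdot) = x_1$ and the pointwise bound $|\partial_t H(s, t)|_{\overline{g}_K} < 1/a(t)$, since then $\Gamma(s, t) := (t, H(s, t))$ is the required timelike homotopy (up to a mild smoothing to meet the piecewise-smooth convention).

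\textbf{Case $K \leq 0$.} Here $\overline{M}_K$ is a Cartan-Hadamard manifold, so the geodesic interpolation $H(s, t) := \exp_{x_0(t)}\bigl(s \log_{x_0(t)}(x_1(t))\bigr)$ is globally well-defined and smooth in $(s, t)$. For each fixed $t$ the variation field $s \mapsto \partial_t H(s, t)$ is a Jacobi field along the geodesic $s \mapsto H(s, t)$ with boundary values $\dot x_0(t)$ and $\dot x_1(t)$. In non-positive sectional curvature the norm of any Jacobi field is a convex function of the parameter, so
\[
|\partial_t H(s, t)|_{\overline{g}_K} \leq \max\bigl(|\dot x_0(t)|_{\overline{g}_K}, |\dot x_1(t)|_{\overline{g}_K}\bigr) < \frac{1}{a(t)},
\]
and $\Gamma(s, \cdot)$ is timelike for every $s \in [0,1]$.

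\textbf{Case $K = +1$.} This is the main obstacle and the only place the assumption $d \geq 2$ is used, through the simple-connectedness $\pi_1(\Sp^d) = 0$. The Cartan-Hadamard argument breaks down on two counts: geodesics between antipodal points on $\Sp^d$ are not unique, and in positive curvature the norm of a Jacobi field need no longer be convex in the parameter. The plan is to subdivide $[t_p, t_q]$ finely enough that on each sub-interval both $x_0$ and $x_1$ lie in a common geodesic ball of radius bounded away from $\pi/2$, where interpolation by the unique minimizing geodesic is well-defined and the explicit sinusoidal form of normal Jacobi fields yields a quantitative velocity estimate. At the subdivision times one then glues the local interpolations together via auxiliary homotopies that slide the intermediate spatial points across $\Sp^d$; here $d \geq 2$ is used to route these shifts around any antipodal obstruction. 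The delicate point is to arrange the concatenation so that the resulting global homotopy has $t$-velocity strictly below $1/a(t)$ throughout, despite the positive-curvature growth of Jacobi fields and the blow-up of $1/a(t)$ as $t \to 0$; this is the step I expect to absorb most of the work.
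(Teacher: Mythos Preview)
Your $K \leq 0$ argument is correct and takes a genuinely different route from the paper. The paper first passes to conformal time $\tau(t) = \int_1^t \frac{dt'}{a(t')}$, reducing to the static metric $-d\tau^2 + \overline{g}_K$ with the constant speed bound $||\dot{\overline{\gamma}}|| < 1$, and then \emph{straightens each spatial curve individually} to the unique shortest geodesic between the endpoints via a sweep $\overline{\Gamma}(u,s) = \exp_{\overline{\gamma}(0)}\bigl(\tfrac{s}{u}\exp_{\overline{\gamma}(0)}^{-1}\overline{\gamma}(u)\bigr)$ for $s < u$. You instead interpolate \emph{pointwise between the two curves} and control the $t$-velocity via convexity of $|J|$ for Jacobi fields in non-positive curvature. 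Both work; your version is slicker for $K \leq 0$ but, as you recognise, does not extend to the sphere.

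For $K = +1$ there is a genuine gap. Your plan is to subdivide $[t_p, t_q]$ so that on each sub-interval both $x_0$ and $x_1$ lie in a common geodesic ball of radius $< \pi/2$. But subdividing the time interval does nothing to bring $x_0(t)$ and $x_1(t)$ closer to one another at a fixed $t$: the two curves can be antipodal (or nearly so) at some $t^* \in (t_p, t_q)$, and no refinement of the partition changes that. The pointwise-interpolation strategy is therefore obstructed globally, not just locally, and the auxiliary ``slides'' at subdivision times cannot repair this without already solving the problem you are trying to solve. (Your worry about the blow-up of $1/a(t)$ as $t \to 0$ is, by contrast, a non-issue: the curves live on the compact interval $[t_p, t_q]$ with $t_p > 0$; passing to conformal time makes the bound a constant.)

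The paper circumvents this by abandoning pointwise interpolation and instead homotoping \emph{each} spatial curve \emph{separately} to a canonical geodesic arc from $x_p$ to $x_q$; transitivity then finishes. The antipodal obstruction now appears only as the possibility that $\overline{\gamma}$ passes through the antipode of the base point of the exponential map used for straightening. This is handled by a perturbation: first reparametrise $\overline{\gamma}$ to be stationary on a short initial interval (creating slack), then use $d \geq 2$ via Sard's theorem (so $\overline{\gamma}([0,T])$ has measure zero in $\Sp^d$) to pick a nearby point $p$ with $-p \notin \overline{\gamma}([0,T])$, detour through $p$, straighten from $p$, and finally straighten from the endpoint $\overline{\gamma}(T)$. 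If you want to complete your proof, this individual-straightening idea is what should replace the ``common small ball'' step.
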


Although it is not needed in this paper, let us emphasise that this proposition also applies to FLRW spacetimes with particle horizons.

\begin{proof}
The proof follows along the same lines as that of Proposition 4.12 in \cite{Sbie15}.
We first note that future one-connectedness is a statement about the conformal class of the spacetime under consideration. After setting $\tau(t) := \int\limits_1^t \frac{1}{a(t')} \, dt'$, the FLRW metric becomes $g = a(t)^2 (-d\tau^2 + \overline{g}_K)$. It thus suffices to show that  the Lorentzian manifolds $(N,-d\tau^2 +\overline{g}_K)$ are future one-connected, where $N = (b,c) \times \overline{M}_K$ and $-\infty \leq b < c \leq \infty$, depending on the scale factor $a$. 

Let $\gamma : [0,T] \to N$ be a piecewise smooth timelike curve. Without loss of generality we can assume that $\gamma$ is parameterised by the $\tau$-coordinate, i.e., 
\begin{equation}
\label{EqLift}
\gamma(s) = (s, \overline{\gamma}(s))\;.
\end{equation} 
Since $\gamma$ is timelike, it follows that 
\begin{equation}
\label{EqBoundSpC}
||\dot{\overline{\gamma}}(s)||_{\overline{g}_K} <1 \quad \textnormal{ for all s }\in [0,T]\;.
\end{equation} Vice versa, any such spatial curve $\overline{\gamma} : [0,T] \to \overline{M}_K$ which satisfies \eqref{EqBoundSpC} lifts to a timelike curve in $N$ via \eqref{EqLift}. 

Consider a continuous map $\OG : [0,1] \times [0,T] \to \overline{M}_K$ such that for each $u \in [0,1]$ $s \mapsto \OG(u,s)$ is a piecewise smooth curve and such that $\OG(u,0)$ and $\OG(u,1)$ are independent of $u \in [0,1]$. We call such a map a homotopy of piecewise smooth curves with fixed endpoints. If it satisfies in addition $||\partial_s \OG(u,s)||_{\overline{g}_K} <1$ for all $s \in [0,T]$ and $u \in [0,1]$, then it also lifts to a timelike homotopy with fixed endpoints between the timelike curves $s \mapsto (s, \OG(0,s))$ and $s \mapsto (s, \OG(1,s))$. We say in this case that the homotopy $\OG$ has the \emph{timelike lifting property}.

We consider first the simpler cases $K=-1,0$.\footnote{Indeed, in those cases one could just quote Proposition 2.7 in \cite{GalLin16}.} Let $\gamma_i$, $i = 1,2$ be two future directed timelike curves in $N$ which have the same endpoints. Since the metric is independent of $t$, without loss of generality we can assume that the curves are defined on the interval $[0,T]$ and are parameterised by the $t$-coordinate. Consider the projections $\overline{\gamma}_i$ of these curves onto $\overline{M}_K$. By concatenation of homotopies it suffices to show that any of these curves is homotopic to the unique shortest curve in $\overline{M}_K$ between the endpoints of $\overline{\gamma}_i$ by a homotopy that has the timelike lifting property as above. Since in the case $K=-1,0$ the exponential map is a global diffeomorphism onto $\overline{M}_K$, this homotopy is readily constructed: let $\overline{\gamma} :[0,T] \to \overline{M}_K$ be a piecewise smooth curve with 
\begin{equation}
\label{EqCL1}
||\dot{\overline{\gamma}}(s)||_{\overline{g}_K} <1 \quad \textnormal{ for all } s \in [0,T]\;.
\end{equation}
We define $\OG :[0,T] \times [0,T] \to \overline{M}_K$ by
$$\OG(u,s) = \begin{cases} \exp_{\overline{\gamma}(0)} \Big( \frac{s}{u} \exp^{-1}_{\overline{\gamma}(0)} \big(\overline{\gamma}(u)\big)\Big) \quad &\textnormal{ for } 0 \leq s < u \\
\overline{\gamma}(s) &\textnormal{ for } 0 \leq u \leq s \;. \end{cases} $$
Note that for $0 \leq s <u$ we have $||\rd_s \OG(u,s)||_{\ogk} = \frac{1}{u} ||\exp^{-1}_{\overline{\gamma}(0)} \big(\overline{\gamma}(u)\big) ||_{\ogk} <1$ by \eqref{EqCL1}. Thus, $\OG$ has the timelike lifting property.

In the case $K=+1$ the difficulty is that the curve may pass through the antipodal point of the base point of the exponential map. There, the exponential map is not defined. To circumvent this difficulty we have to use $d \geq 2$ and we perturb the start of the curve slightly so that it passes through a point $p$ such that the curve does not contain $-p$. In a series of steps we show that given a curve $\overline{\gamma} : [0,T] \to \Sp^d$ with $||\dot{\overline{\gamma}}(s)||_{\overline{g}_{\mathbb{S}^d}} <1$ we can find a homotopy with fixed endpoints and with the timelike lifting property that homotopes $\overline{\gamma}$ into either the unique shortest curve from $\overline{\gamma}(0)$ to $\overline{\gamma}(T)$ in the case $\overline{\gamma}(T) \neq -\overline{\gamma}(0)$, or into a fixed geodesic arc from $\overline{\gamma}(0)$ to $\overline{\gamma}(T) = -\overline{\gamma}(0)$.

In the \textbf{first step} we make the curve stationary in the beginning and then accelerate the curve slightly afterwards while preserving that the velocity is bounded away from $1$. Since $\overline{\gamma}$ is piecewise smooth, there exists a $\delta >0$ and an $\varepsilon >0$ such that $||\dot{\overline{\gamma}}(s)||_{\overline{g}_{\mathbb{S}^d}} < 1 - \varepsilon$ for all  $s \in [0,\delta]$. We define $\lambda : [0, \varepsilon\delta] \times [0, T] \to [0,T]$ by
\begin{equation*}
\lambda(u,s) := \begin{cases} 0 &\textnormal{ for } 0 \leq s \leq u \\
(s-u) \frac{\delta}{\delta - u} \quad &\textnormal{ for } u \leq s \leq \delta \\
s &\textnormal{ for }  \delta \leq s \leq T \;,  \end{cases}
\end{equation*}
and set
\begin{equation*}
\overline{\Gamma}_1 (u,s) := \overline{\gamma}\big( \lambda(u,s)\big) \;.
\end{equation*}
Note that $\frac{\delta}{\delta - u} \leq \frac{1}{1-\varepsilon}$ for all $u \in [0, \varepsilon \delta]$, and thus we have $|| \partial_s \overline{\Gamma}_1||_{\ogS} < 1$. It now follows that $\overline{\Gamma}_1 : [0, \varepsilon \delta] \times [0, T] \to \Sp^d$ is a homotopy with fixed endpoints that has the timelike lifting property. We set $\overline{\Gamma}_1(\varepsilon \delta, \cdot) =: \overline{\gamma}^{(1)}(\cdot)$.

In the \textbf{second step} we use the wiggle-room created in the first step to carry out the perturbation. Since $d \geq 2$ we can apply for example Sard's theorem to infer that $\overline{\gamma}([0,T])$ has measure zero in $\Sp^d$. Thus 
\begin{equation}
\label{PfEqPoint}
\parbox{0.86\textwidth}{
\textnormal{there is a point $p \in B_{\frac{\varepsilon \delta}{2}}(\overline{\gamma}(0)) \subseteq \Sp^d$ such that $-p$ is not contained in $\overline{\gamma}([0,T])$ and such that the unique shortest curve from $\overline{\gamma}(0)$ to $p$ does not pass through $-\overline{\gamma}(T)$ except possibly at $\overline{\gamma}(0)$.}}
\end{equation}
We now define $\overline{\Gamma}_2 : [0,1] \times [0,T] \to \Sp^d$ by
\begin{equation*}
\overline{\Gamma}_2(u,s)\\ := \begin{cases} \exp_{\overline{\gamma}(0)}\big[\frac{s}{\nicefrac{\varepsilon\delta}{2}} \cdot u \cdot \exp^{-1}_{\overline{\gamma}(0)}(p)\big] \quad   &\textnormal{ for } 0 \leq s \leq \frac{\varepsilon \delta}{2} \\
\exp_{\overline{\gamma}(0)}\big[\frac{\varepsilon\delta - s}{\nicefrac{\varepsilon\delta}{2}} \cdot u \cdot \exp^{-1}_{\overline{\gamma}(0)}(p)\big] &\textnormal{ for }  \frac{\varepsilon \delta}{2} \leq s \leq \varepsilon \delta \\
\overline{\gamma}^{(1)}(s) &\textnormal{for } \varepsilon \delta \leq s \leq T \;. 
\end{cases}
\end{equation*}
We compute for $s \in [0, \frac{\varepsilon \delta}{2}]$ 
\begin{equation*}
||\partial_s \overline{\Gamma}_2(u,t) ||_{\ogS} = \frac{u}{\nicefrac{\varepsilon \delta}{2}} || \exp^{-1}_{\overline{\gamma}(0)}(p) ||_{\ogS} < u \leq 1  
\end{equation*}
where we used $p \in B_{\frac{\varepsilon \delta}{2}}(\overline{\gamma}(0))$. We set $\overline{\gamma}^{(2)}(\cdot) := \overline{\Gamma}_2(1, \cdot)$. Hence, $\overline{\Gamma}_2$ is a homotopy with fixed endpoints between $\overline{\gamma}^{(1)}$ and $\overline{\gamma}^{(2)}$ that has the timelike lifting property. Note that $\overline{\gamma}^{(2)}(\frac{\varepsilon \delta}{2}) = p$.

In the \textbf{third step} we use the exponential map based at $p$, together with \eqref{PfEqPoint}, to straighten out $\overline{\gamma}^{(2)}|_{[\frac{\varepsilon \delta}{2},T]}$. We define $\overline{\Gamma}_3 : [ \frac{\varepsilon \delta }{2} , T] \times [0, T] \to \Sp^d$ by
\begin{equation*}
\overline{\Gamma}_3(u,s) := \begin{cases} \overline{\gamma}^{(2)}(s) &\textnormal{ for } 0 \leq s \leq \frac{\varepsilon \delta}{2} \\
\exp_{p} \big[\frac{ s- \frac{\varepsilon \delta}{2}}{u - \frac{\varepsilon \delta}{2}} \exp^{-1}_p\big(\overline{\gamma}^{(2)}(u)\big)\big]  \quad &\textnormal{ for } \frac{\varepsilon \delta}{2} \leq s < u \\
\overline{\gamma}^{(2)}(s) &\textnormal{ for } u \leq s \leq T \;.
\end{cases}
\end{equation*}
For $\frac{\varepsilon \delta}{2} \leq s <u$ we compute  $||\rd_s \OG_3(u,s)||_{\ogS} = \frac{1}{u-\frac{\varepsilon \delta}{2}} ||\exp^{-1}_p \big(\overline{\gamma}^{(2)}(u)\big)||_{\ogS} <1$, since $||\dot{\overline{\gamma}}^{(2)}(s)||_{\ogS} <1$.
Setting $\overline{\gamma}^{(3)}(\cdot) := \overline{\Gamma}_3(T,\cdot)$,  it follows that $\overline{\Gamma}_3$ is a homotopy with fixed endpoints between $\overline{\gamma}^{(2)}$ and $\overline{\gamma}^{(3)}$ which has the timelike lifting property.

Finally, in the \textbf{fourth step}, we use the exponential map based at $\overline{\gamma}(T)$ to homotope $\overline{\gamma}^{(3)}$ into a geodesic. We define $\overline{\Gamma}_4 : [0,T) \times [0,T] \to \Sp^d$ by
\begin{equation*}
\overline{\Gamma}_4(u,s):= \begin{cases} \overline{\gamma}^{(3)}(s) &\textnormal{ for } 0 \leq s \leq T-u   \\
\exp_{\overline{\gamma}(T)} \big[ \frac{T-s}{u} \exp^{-1}_{\overline{\gamma}(T)} \big( \overline{\gamma}^{(3)} (T-u)\big)\big] \quad  &\textnormal{ for } T-u <s \leq T \;. \end{cases}
\end{equation*}
This is well-defined by \eqref{PfEqPoint}.
If $\overline{\gamma}(0) \neq - \overline{\gamma}(T)$, then one can extend $\overline{\Gamma}_4$ to $[0,T] \times [0,T]$ by the above definition to yield a homotopy with fixed endpoints between $\overline{\gamma}^{(3)}$ and the unique shortest curve from $\overline{\gamma}(0)$ to $\overline{\gamma}(T)$.  As before one checks that it has the timelike lifting property. 

In the case $\overline{\gamma}(0) = - \overline{\gamma}(T)$, we introduce spherical normal coordinates $(r, \theta)$ for $\Sp^d$ at $\overline{\gamma}(T)$, where $\theta \in \Sp^{d-1}$, and express $\overline{\gamma}^{(3)}(s)$ with respect to these coordinates by $\big(r(s), \theta(s)\big)$. Let $\theta_0 :=\lim_{s \searrow 0} \theta(s)$.  It now follows that $\overline{\Gamma}_4(u,\cdot)$ converges for $u \nearrow T$ to  the geodesic arc parametrised, in the spherical normal coordinates, by $s \mapsto \big(\frac{T-s}{T} \pi, \theta_0\big)$, $s \in (0,T]$. In this case this limit curve furnishes the extension of $\overline{\Gamma}_4$. Finally, consider a fixed geodesic arc from $\overline{\gamma}(0)$ to $\overline{\gamma}(T)$ specified in the above spherical normal coordinates by a point $\theta_1 \in \Sp^{d-1}$. It then remains to choose a continuous curve $\sigma : [0,1] \to \Sp^{d-1}$ with $\sigma(0) = \theta_0$ and $\sigma(1) = \theta_1$ and define $\OG_5 : [0,1] \times [0,T] \to \Sp^d$ in the spherical normal coordinates by $\OG_5(u,s) = \big( \frac{T-s}{T} \pi, \sigma(u)\big)$. Here, we have also crucially used $d \geq 2$.
In either case, concatenation of all homotopies concludes the proof.
\end{proof}

\subsection{The case $K=+1$} \label{SecFLRW1}

\begin{proposition}\label{PropK1}
Let $d \geq 1$ and consider any $(d+1)$-dimensional FLRW spacetime $(M,g)$ as defined in the introduction with $K=+1$. Let $\gamma : (0,1) \to M$ be a future directed timelike curve parameterised by the $t$-coordinate, i.e., $\gamma(s) = (s, \overline{\gamma}(s))$ with $\overline{\gamma}(s) \in \Sp^d$ for $s \in (0,1)$. For any $s_0 \in (0,1)$ we then have
$$\bigcup_{0<s<s_0} I^+\big(\gamma(s),M\big) \cap I^-\big(\gamma(s_0),M\big) = I^-\big(\gamma(s_0),M\big) \;.$$ Furthermore for each such $s_0 \in (0,1)$ there exists a $t_0 >0$ such that  $\{t_0\} \times \Sp^d \subseteq I^-\big(\gamma(s_0),M\big) $.
\end{proposition}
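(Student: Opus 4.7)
The proof strategy is to pass to conformal coordinates and exploit the two key geometric facts separately: absence of particle horizons gives infinite conformal time toward the past, while $K=+1$ gives a compact spatial slice of bounded diameter.

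First I would introduce the conformal time $\tau(t) := \int_1^t \frac{1}{a(t')}\,dt'$, so that $g = a(t)^2\bigl(-d\tau^2 + g_{\mathbb{S}^d}\bigr)$. Since causal relations are conformally invariant, it suffices to analyse $(\tilde N, \tilde g)$ with $\tilde N = (\tau_-, \tau_+) \times \mathbb{S}^d$ and $\tilde g = -d\tau^2 + g_{\mathbb{S}^d}$. The assumption of no particle horizons, $\int_0^1 \tfrac{1}{a(t')}\,dt' = \infty$, is precisely what ensures $\tau_- = -\infty$. This will be the only place the particle-horizon hypothesis enters.

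Next I would record the elementary causal criterion in $(\tilde N,\tilde g)$: two points $(\tau_1,x_1)$ and $(\tau_2,x_2)$ with $\tau_2 > \tau_1$ can be joined by a future-directed timelike curve if and only if $\tau_2 - \tau_1 > d_{\mathbb{S}^d}(x_1,x_2)$. Indeed, a minimising $\mathbb{S}^d$-geodesic of length $L := d_{\mathbb{S}^d}(x_1,x_2)$ traversed at constant spatial speed as $\tau$ runs from $\tau_1$ to $\tau_2$ has speed $L/(\tau_2-\tau_1)<1$, hence lifts to a timelike curve; conversely any timelike curve projects to a spatial curve of $\tilde g$-length $< \tau_2-\tau_1$. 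Because $\mathrm{diam}(\mathbb{S}^d, g_{\mathbb{S}^d}) = \pi$, this condition is automatic whenever $\tau_2 - \tau_1 > \pi$.

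With these preliminaries, both claims follow by a direct choice of parameter. For the second statement, set $\tau_0 := \tau(s_0)$; since $\tau(t) \to -\infty$ as $t \to 0$, pick $t_0 \in (0, s_0)$ with $\tau(t_0) < \tau_0 - \pi$. Then for every $x \in \mathbb{S}^d$ one has $\tau_0 - \tau(t_0) > \pi \geq d_{\mathbb{S}^d}(x, \overline{\gamma}(s_0))$, so $(t_0, x) \in I^-\bigl(\gamma(s_0), M\bigr)$, proving $\{t_0\}\times\mathbb{S}^d \subseteq I^-(\gamma(s_0),M)$. For the first statement the inclusion $\subseteq$ is trivial. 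For $\supseteq$, given $q \in I^-(\gamma(s_0),M)$, write $q = (\tau_q, x_q)$ with $\tau_q < \tau_0$; again using $\tau(s)\to-\infty$, pick $s \in (0, s_0)$ with $\tau(s) < \tau_q - \pi$. Then $\tau_q - \tau(s) > \pi \geq d_{\mathbb{S}^d}(\overline{\gamma}(s), x_q)$, so $q \in I^+(\gamma(s), M) \cap I^-(\gamma(s_0), M)$, as required.

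There is essentially no conceptual obstacle; the entire content of the proposition is the interplay between the two geometric inputs (infinite conformal past vs.\ finite spatial diameter), and once one works in the conformal picture the arguments are one line each. The only bookkeeping subtlety is to remember that $\gamma$ is parameterised by the original $t$-coordinate rather than $\tau$, but this plays no role beyond ensuring that we can pick $s$ (equivalently $t_0$) arbitrarily small.
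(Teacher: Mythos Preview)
Your proof is correct and follows essentially the same approach as the paper: both pass to the conformal picture, use the characterisation of timelike relatedness via $\tau_2-\tau_1 > d_{\mathbb{S}^d}(x_1,x_2)$ together with $\mathrm{diam}(\mathbb{S}^d)=\pi$, and exploit the absence of particle horizons to obtain $\tau\to -\infty$. The only cosmetic difference is that the paper first proves the slightly stronger statement $\bigcup_{0<s<s_0} I^+(\gamma(s),M)=M$ and then intersects with $I^-(\gamma(s_0),M)$, whereas you establish the inclusion for points of $I^-(\gamma(s_0),M)$ directly.
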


\begin{proof}
We start with a brief observation: consider any timelike curve $\sigma$ in $(M,g)$. We can assume without loss of generality that it is  parameterised by the time coordinate, i.e., $\sigma(s) = (s, \overline{\sigma}(s))$. Since $\sigma$ is a timelike curve we have $0> -1 + a(s)^2\overline{g}_K(\dot{\overline{\sigma}}(s), \dot{\overline{\sigma}}(s))$, i.e.
\begin{equation}
\label{EqCurveTimelike}
\frac{1}{a(s)^2} > \overline{g}_K(\dot{\overline{\sigma}}(s), \dot{\overline{\sigma}}(s))\;.
\end{equation}
Vice versa, any spatial curve $\overline{\sigma}$ mapping into $\Sp^d$ that satisfies \eqref{EqCurveTimelike} gives rise to a timelike curve $\sigma(s) = (s, \overline{\sigma}(s))$.

We first claim that for any $s_0 \in (0,1)$ we have $\bigcup_{0<s<s_0} I^+\big(\gamma(s),M\big) = M$.\footnote{Although not needed for the purposes of this paper, this shows that there is only one TIF, namely $M$ itself.} To see this consider a point $(\hat{t}, \hat{\omega}) \in M = (0,\infty) \times \Sp^d$. Since there are no particle horizons, there exists $0< \tilde{t} < \hat{t}$ such that $\int\limits_{\tilde{t}}^{\hat{t}} \frac{1}{a(t')} \, dt' > \pi$.  Without loss of generality assume $\tilde{t} <s_0$. Since the diameter of $\Sp^d$ is equal to $\pi$, we can now find a spatial curve $\overline{\sigma} : [0,1] \to \Sp^d$ from $\overline{\gamma}(\tilde{t})$ to $\hat{\omega}$ that satisfies \eqref{EqCurveTimelike}. This curve lifts to a future directed timelike curve from $\gamma(\tilde{t})$ to $(\hat{t}, \hat{\omega})$ which shows the claim. This proves the first statement of the proposition. The second statement follows similarly: given $s_0 \in (0,1)$ we choose $0< t_0 < s_0$ such that $\int\limits_{t_0}^{s_0} \frac{1}{a(t')} \, dt' > \pi$. By the above argument any point in $\{t_0\} \times \Sp^d$ can be reached from $\gamma(s_0)$ by a past directed timelike curve.
\end{proof}

Note that this proposition remains true if we replace the spatial geometry of constant positive curvature by any compact spatial geometry.

\subsection{The case $K=-1$} \label{SecFLRW2}

\begin{proposition} \label{PropTIFs}
Let $d \geq 2$ and consider any $(d+1)$-dimensional FLRW spacetime $(M,g)$ as defined in the introduction  with $K=-1$ and without particle horizons. Consider a point $\overline{p} \in \overline{M}_{-1}$ together with spherical normal coordinates centred at $\overline{p}$ for $\overline{M}_{-1}$ as in \eqref{EqMetricNormal}. Set $\tau(t) := \int_1^t \frac{1}{a(t')} \, dt'$ and $u := r + \tau$. 
\begin{enumerate}[a)]
\item Let $\gamma :(0,1] \to M$ be a future directed and past inextendible timelike curve.  We then have either \begin{enumerate}[i)]
\item $\lim_{s \to 0} u \big( \gamma(s) \big) = - \infty$
\end{enumerate}
or
\begin{enumerate}[i)]
\item[ii)] $\lim_{s \to 0} u \big(\gamma(s) \big) = u_0 > - \infty$ and $\lim_{s \to 0} r \big( \gamma(s) \big) = \infty $ and $\lim_{s \to 0} \omega\big(\gamma(s) \big) = \omega_0 \in \Sp^{d-1}$.
\end{enumerate}
\item In the case $i)$ we have $\bigcup_{0 < s} I^+\big( \gamma(s),M\big) = M$.
\item Let $\gamma_i : (0,1] \to M$, $i=1,2$, be two future directed and past inextendible timelike curves with $\lim_{s \to 0} u \big(\gamma_i(s) \big) = u_i > - \infty$ and $\lim_{s \to 0} \omega \big(\gamma_i(s) \big) = \omega_i \in \Sp^{d-1}$. 
\begin{enumerate}[1)]
\item If  $\omega_1 = \omega_2$ and $u_1 = u_2$, then $\bigcup_{s < 0} I^+\big(\gamma_1(s), M\big) = \bigcup_{s < 0}I^+\big(\gamma_2(s), M\big)$.
\item If $\omega_1 = \omega_2$ and $u_1 < u_2$, then $\bigcup_{s < 0} I^+\big(\gamma_1(s), M\big)  \supseteq \bigcup_{s < 0} I^+\big(\gamma_2(s), M\big)$.
\end{enumerate}
\end{enumerate}
\end{proposition}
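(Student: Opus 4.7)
The plan is to pass to the conformally related metric $\tilde{g} := a(t)^{-2} g = -d\tau^2 + dr^2 + \sinh^2(r)\,g_{\Sp^{d-1}}$ on $N := (-\infty,\tau_{\max}) \times \overline{M}_{-1}$, where the no-particle-horizons assumption is exactly that $\tau \to -\infty$ as $t \to 0$. Since causality is conformally invariant, the proposition can be proved in $(N,\tilde{g})$. Writing any future directed timelike curve, parameterised by $\tau$, as $\tau\mapsto(\tau,r(\tau),\omega(\tau))$, the timelike condition $\dot{r}^2 + \sinh^2(r)\,|\dot{\omega}|_{g_{\Sp^{d-1}}}^2<1$ gives $|\dot{r}|<1$ and $|\dot{\omega}|_{g_{\Sp^{d-1}}} < 1/\sinh(r)$. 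The crucial observation is that $\dot{u} = \dot{r} + 1 > 0$, so $u$ is strictly monotonically increasing along any future directed timelike curve. Past inextendibility of $\gamma$ together with $|\dot{r}|<1$ forces $\tau\circ\gamma \to -\infty$, for otherwise the image of $\gamma$ would lie in a compact subset of $N$.

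For part a), monotonicity of $u\circ\gamma$ yields existence of $u_0 := \lim_{s\to 0} u(\gamma(s)) \in [-\infty,\infty)$; in case ii), finiteness of $u_0$ together with $\tau\to -\infty$ forces $r = u-\tau \to \infty$. Choosing $\tau_1$ so small that $r(\tau) \geq -\tau + u_0 - 1$ for $\tau \leq \tau_1$ yields $\sinh r(\tau) \gtrsim e^{-\tau}$, hence
\[
\int_{-\infty}^{\tau_1} |\dot{\omega}(\tau)|_{g_{\Sp^{d-1}}}\,d\tau \;\leq\; \int_{-\infty}^{\tau_1}\frac{d\tau}{\sinh r(\tau)} \;\lesssim\; \int_{-\infty}^{\tau_1} e^{\tau}\,d\tau \;<\; \infty,
\]
and completeness of $\Sp^{d-1}$ then produces the limit $\omega(\gamma(\tau))\to \omega_0 \in \Sp^{d-1}$. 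For part b), in case i) and for any $q = (\tau_q, r_q, \omega_q) \in N$, the triangle inequality in $\overline{M}_{-1}$ gives $d_{\overline{g}_{-1}}(\overline{\gamma}(s), \overline{q})\leq r(\gamma(s)) + r_q$; hence $u(\gamma(s))<\tau_q - r_q$, which holds eventually as $u\to-\infty$, implies $\tau_q - \tau(\gamma(s)) > d_{\overline{g}_{-1}}(\overline{\gamma}(s), \overline{q})$, i.e.\ $\gamma(s) \ll q$ in $(N,\tilde{g})$.

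For part c), the asymptotic input is the hyperbolic law of cosines: for $A, B \in \overline{M}_{-1}$ with spherical normal coordinates $(R_A, \omega_A)$, $(R_B, \omega_B)$ and $\alpha := d_{g_{\Sp^{d-1}}}(\omega_A, \omega_B)$,
\[
\cosh d_{\overline{g}_{-1}}(A,B) = \cosh R_A \cosh R_B - \sinh R_A \sinh R_B \cos \alpha,
\]
which for $R_A\to\infty$ with $(R_B,\alpha)$ ranging in a compact set gives, uniformly,
\[
d_{\overline{g}_{-1}}(A,B) = R_A + \log\bigl(\cosh R_B - \sinh R_B\cos\alpha\bigr) + o(1).
\]
Applied with $A = \overline{\gamma}_i(s)$ and $B = \overline{q}$, this yields
\[
\tau(\gamma_i(s)) + d_{\overline{g}_{-1}}(\overline{\gamma}_i(s), \overline{q}) \;\longrightarrow\; u_i + \log\!\bigl(\cosh r_q - \sinh r_q \cos d_{g_{\Sp^{d-1}}}(\omega_i, \omega_q)\bigr) =: L(q;u_i,\omega_i).
\]
Combined with the standard fact that two points in $(N,\tilde{g})$ are chronologically related if and only if the difference in their $\tau$-coordinates exceeds the $\overline{g}_{-1}$-distance of their spatial projections, and with the observation that $\gamma_i(s_0) \ll q$ implies $\gamma_i(s) \ll q$ for all $s\leq s_0$, one obtains the sharp characterisation
\[
\bigcup_{s>0} I^+(\gamma_i(s),M) \;=\; \{q\in N : L(q;u_i,\omega_i) < \tau_q\},
\]
where $\supseteq$ is immediate from the strict limit, while $\subseteq$ uses the openness of the union together with the fact that the smooth level set $\{L = \tau_q\}$ has empty interior. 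Claims c1) and c2) follow at once: in c1) the function $L$ is the same, and in c2) $u_1 < u_2$ with $\omega_1=\omega_2=\omega_0$ gives $L(\,\cdot\,;u_1,\omega_0) < L(\,\cdot\,;u_2,\omega_0)$ pointwise, so the TIF of $\gamma_1$ strictly contains that of $\gamma_2$. The main technical hurdle is establishing the asymptotic expansion of the hyperbolic distance uniformly in the compact parameters $(R_B,\alpha)$; once this is in hand, everything else reduces to an exercise in the causal structure of the static cylindrical spacetime $(N,\tilde{g})$.
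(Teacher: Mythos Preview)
Your argument for parts a) and b) is essentially the same as the paper's: both pass to the conformal static metric, use that $u$ is strictly increasing along future directed timelike curves, and bound $|\dot\omega|$ by the integrable quantity $1/\sinh r$ to obtain convergence on $\Sp^{d-1}$. For b) you use the triangle inequality $d_{\overline g_{-1}}(\overline\gamma(s),\overline q)\le r(\gamma(s))+r_q$, while the paper routes through the radial null geodesics from $\overline p$ and the push-up lemma; both are short.

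For part c) you take a genuinely different route. The paper argues softly: working in $(u,r,\omega)$-coordinates it exploits that $u$-translations and $SO(d)$-rotations are isometries of the conformal metric, uses openness of $I^-$ to produce a uniform $(u,\omega)$-neighbourhood along the tail of $\gamma_1$, and then shows that $\gamma_2$ eventually enters this neighbourhood; c2) is reduced to c1) by a $u$-translation. No distances are computed. You instead compute explicitly: the hyperbolic law of cosines yields the asymptotic $d_{\overline g_{-1}}(\overline\gamma_i(s),\overline q)=r(\gamma_i(s))+\log(\cosh r_q-\sinh r_q\cos\alpha)+o(1)$, whence $\tau+d\to L(q;u_i,\omega_i)$ and the TIF is exactly the sublevel set $\{\tau_q>L\}$. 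This is correct and in fact proves more than the paper does, namely a closed-form description of each TIF (from which one can also read off that distinct $(u_0,\omega_0)$ give distinct TIFs, something the paper mentions but does not prove). The trade-off is that you need the uniform asymptotic expansion, whereas the paper needs nothing beyond the symmetry group.

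One presentational remark on your $\subseteq$ direction: the phrase ``openness of the union together with the fact that $\{L=\tau_q\}$ has empty interior'' is not by itself enough; you also need $\mathrm{TIF}\subseteq\{L\le\tau_q\}$, which follows immediately from your monotonicity observation (if $\gamma_i(s_0)\ll q$ then $\gamma_i(s)\ll q$ for all $s\le s_0$, hence $\tau_q>\tau(\gamma_i(s))+d(\overline\gamma_i(s),\overline q)$ for all small $s$, and passing to the limit gives $\tau_q\ge L$). With that inclusion in hand your openness argument goes through, though in fact the strict monotonicity of $s\mapsto\tau+d$ (from $|\dot r|+|\partial_\tau d|<1$ strictly) already gives $\mathrm{TIF}=\{L<\tau_q\}$ directly, without appealing to openness.
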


This proposition begins the characterisation of the TIFs of FLRW spacetimes with $K=-1$ and without particle horizons in terms of $(u, \omega)$-coordinates induced by spherical normal coordinates at a given point $\overline{p} \in \overline{M}_{-1}$. However, the characterisation given here is not complete; with some more work one can show that in the case $c)$, if $(u_1, \omega_1) \neq (u_2, \omega_2)$, then the corresponding TIFs are not the same. However, this is not needed for the purposes of this paper.

\begin{proof}
Note that causal notions only depend on the conformal class of the metric. Using $\tau$ as a coordinate instead of $t$ the metric $g$ takes the form
$$g = a\big(t(\tau) \big)^2 \big[ \underbrace{- d\tau^2 + dr^2 + \sinh^2(r) g_{\Sp^{d-1}}}_{=:h} \big] \;.$$ In the following it will be convenient to work with $(M,h)$ instead of with $(M,g)$.\footnote{Note that the exact Milne spacetime has the same conformal metric $h$. Exact Milne spacetime embeds isometrically into Minkowski spacetime. This embedding can be used to give an alternative proof of Proposition \ref{PropTIFs}.}

\underline{Proof of $a)$:} Since $h(du, \cdot)$ is past directed null, $u \big( \gamma(s) \big)$ is a strictly increasing function of $s$ -- and thus the limit $\lim_{s \to 0} u \big( \gamma(s) \big)$ exists in $\{-\infty\} \cup \R$. 

Assume the case $\lim_{s \to 0} u \big( \gamma(s) \big) = u_0 > - \infty$. We have $u \big(\gamma(s) \big) > u_0$ for all $0<s \leq 1$ and $\tau\big( \gamma(s) \big) \to - \infty$ for $s \to 0$. It thus follows from $r \big(\gamma(s) \big) = u \big(\gamma(s) \big) - \tau \big( \gamma(s) \big)$ that $r \big(\gamma(s) \big) \to \infty$ for $s \to 0$.

To see that $\lim_{s \to 0} \omega \big(\gamma(s) \big) \in \Sp^{d-1}$ exists, parameterise the curve $\gamma$ by $\tau$ and note that\footnote{By abuse of notation we will denote the reparameterised curve again by $\gamma$.} $$0 > h(\dot{\gamma}, \dot{\gamma}) = -1 +\dot{r}^2 + \sinh^2(r) \, ||\dot{\omega}||^2_{\Sp^{d-1}} \;.$$ Moreover, by the above we have $r(\tau) > u_0 - \tau$. Thus, $$||\dot{\omega}||_{\Sp^{d-1}} < \frac{1}{\sinh(r)} < \frac{1}{ \sinh(u_0 - \tau)} \;,$$ which is integrable for $\tau \to - \infty$. This proves $a)$.

\underline{Proof of $b)$:} Let $(\tau, \overline{x}) \in M$. Since there are no particle horizons, there exists $\hat{\tau} < \tau$ such that $(\tau, \overline{x}) \in I^+\big((\hat{\tau}, \overline{p}),M\big)$. Without loss of generality we can assume $\hat{\tau} < u\big(\gamma(1)\big)$. Note that all points on $u = \hat{\tau}$ can be connected to $(\hat{\tau}, \overline{p})$ by a future directed null geodesic.  Since $\lim_{s \to 0} u \big(\gamma(s) \big) = - \infty$, there is an $s_0$ with $u\big(\gamma(s_0) \big) = \hat{\tau}$. The push-up property then gives $(\tau, \overline{x}) \in I^+ \big(\gamma(s_0),M \big)$.

\underline{Proof of $c)$:} To prove $1)$, it suffices, by symmetry, to show $\Ima(\gamma_1) \subseteq \bigcup_{0 < s} I^+ \big(\gamma_2(s),M\big)$. We use $(u,r, \omega)$-coordinates, in which the metric takes the form $h = -du^2 + du \otimes dr + dr \otimes du + \sinh^2(r) g_{\Sp^{d-1}}$. So let $\gamma_1(s_1)$ be given. We show that there is $s_2>0$ such that $\gamma_2(s_2) \in I^- \big(\gamma_1(s_1),M\big)$.

Clearly, we have $\gamma_1 (\frac{s_1}{2}) \in I^- \big(\gamma_1(s_1),M \big)$; and by the openness of the timelike past there is $\lambda >0$ and an open neighbourhood $\id \in V \subseteq SO(d)$ such that $$\big(\gamma_1^u(\frac{s_1}{2}) - \lambda, \gamma_1^u(\frac{s_1}{2} )+ \lambda \big) \times \{\gamma_1^r(\frac{s_1}{2}) \} \times \{ R \gamma_1^\omega(\frac{s_1}{2}) \; | \; R \in V\} \subseteq I^- \big(\gamma_1(s_1),M \big) \;.$$
Since translations in $u$ and rotations $R \in SO(d)$ act as isometries on $(M,h)$, we can act with those on $\gamma_1|_{(0, \frac{s_1}{2}]}$ to infer that
\begin{equation}
\label{EqPartOpenness}
\bigcup\limits_{0 < s \leq \frac{s_1}{2}} \big(\gamma_1^u(s) - \lambda, \gamma_1^u(s)  + \lambda\big) \times \{\gamma_1^r(s) \} \times \{ R \gamma_1^\omega(s) \; | \; R \in V\} \subseteq I^- \big(\gamma_1(s_1),M \big) \;.
\end{equation} 
We claim that $\gamma_2(s_2)$ is contained in the left hand side for $0< s_2$ small enough. To see this, first note that $\gamma_1^\omega(s) \to \omega_1$ and $\gamma^u_1(s) \to u_1$ imply that there is a neighbourhood $\omega_1 \in B \subseteq \Sp^{d-1}$ and a $0< \hat{s}_1$ such that
\begin{equation}
\label{EqPartOp2}
\bigcup\limits_{0 < s < \hat{s}_1} (u_1 - \frac{\lambda}{2}, u_1 + \frac{\lambda}{2}) \times \{\gamma_1^r(s)\} \times B
\end{equation}
is contained in the left hand side of \eqref{EqPartOpenness}. Since $u_1 = u_2$ and $\omega_1 = \omega_2$, we can now choose $0 < s_2$ close enough to $0$ such that $\gamma_2^u(s_2) \in (u_1, u_1 + \frac{\lambda}{2})$, $\gamma_2^\omega(s_2) \in B$, and $\gamma^r_2(s_2) >  \gamma_1^r(\hat{s}_1)$. Since $\gamma_1^r(s) \to \infty$ for $s \to 0$, $\gamma_2(s_2)$ is contained in \eqref{EqPartOp2}. This proves $1)$.

To prove $2)$, we consider, in $(u,r, \omega)$-coordinates, the time-translation $\hat{\gamma}_2(s) := \big(u_2 - u_1 + \gamma_1^u(s), \gamma_1^r(s), \gamma_1^\omega(s) \big)$ of $\gamma_1$. By the time-translation invariance of $h$, $\hat{\gamma}_2$ is also a future directed past inextendible timelike curve with $\lim_{s \to 0} \hat{\gamma}^u_2(s) = u_2$ and $\lim_{s \to 0} \hat{\gamma}^\omega_2(s) = \omega_1 = \omega_2$. We thus infer $\bigcup_{0<s} I^+ \big(\gamma_2(s), M \big) = \bigcup_{0<s} I^+ \big(\hat{\gamma}_2(s),M \big)$ by part $1)$. But now it is obvious that $\hat{\gamma}_2(s) \in I^+ \big(\gamma_1(s), M\big)$ for all $0<s$, from which $2)$ follows.
\end{proof}

The next proposition will be used, in conjunction with the time-reversal of Proposition \ref{PropIntNG}, to relate the causality in a past boundary chart with that of $M$.

\begin{proposition} \label{PropNGI}
Let $d \geq 1$ and consider any $(d+1)$-dimensional FLRW spacetime $(M,g)$ as defined in the introduction  with $K=-1$ and without particle horizons. Consider two points $r,s \in M$ with $r \in I^-(s,M)$. Then no past directed null geodesic emanating from $r$ intersects a past directed null geodesic emanating from $s$.
\end{proposition}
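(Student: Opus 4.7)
The plan is to reduce the statement to the triangle inequality on hyperbolic space by passing to a conformally related static metric. Since causality depends only on the conformal class, I would replace $g$ by $h := a(t)^{-2} g = -d\tau^2 + \overline{g}_{-1}$, where $\tau(t) := \int_1^t a(t')^{-1}\, dt'$, as is already done in the proof of Proposition \ref{PropTIFs}. Because the FLRW spacetime has no particle horizons, $\tau(t) \to -\infty$ as $t \to 0$, so $(M,h)$ embeds isometrically as the open slab $(-\infty,\tau_\infty) \times \overline{M}_{-1}$ inside the ambient static Lorentzian manifold $(\R \times \overline{M}_{-1}, -d\tau^2 + \overline{g}_{-1})$. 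The crucial feature is that $(\overline{M}_{-1}, \overline{g}_{-1})$ is the simply connected complete hyperbolic space of dimension $d$, hence a Hadamard manifold: any two points are joined by a unique minimising geodesic and the distance function $d_{-1}$ satisfies the triangle inequality globally.

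Next I would describe the past null geodesics of $(M,h)$ explicitly. Since $h$ is a direct product with time-independent spatial factor, the geodesic equations decouple, so a past directed null geodesic from a point $p=(\tau_p,x_p)$ is, up to affine reparameterisation, of the form $\gamma(s) = (\tau_p - s, \sigma(s))$, where $\sigma : [0,\infty) \to \overline{M}_{-1}$ is a unit-speed geodesic starting at $x_p$. The fact that $\gamma$ is defined for all $s \geq 0$ uses the no-particle-horizons assumption, which ensures there is no finite past $\tau$-boundary to hit. Since $\overline{M}_{-1}$ is Hadamard, the spatial geodesic $\sigma$ reaches a given point $x \in \overline{M}_{-1}$ precisely when $\sigma$ is the unique minimising geodesic from $x_p$ to $x$, at parameter value $s = d_{-1}(x_p, x)$. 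Hence a point $q=(\tau_q,x_q) \in M$ lies on some past directed null geodesic emanating from $p$ if and only if $\tau_q = \tau_p - d_{-1}(x_p, x_q)$.

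Now suppose, for contradiction, that past directed null geodesics emanating from $r = (\tau_r, x_r)$ and from $s = (\tau_s, x_s)$ meet at a common point $q = (\tau_q, x_q) \in M$. Applying the characterisation above to both $r$ and $s$ yields
\[ \tau_s - \tau_r = d_{-1}(x_q, x_s) - d_{-1}(x_q, x_r). \]
On the other hand, the hypothesis $r \in I^-(s,M)$ furnishes a future directed timelike curve in $(M,h)$ from $r$ to $s$; parameterising it by $\tau$ and using that its spatial velocity has $\overline{g}_{-1}$-norm strictly less than $1$ (cf.\ \eqref{EqCurveTimelike}), the $\overline{g}_{-1}$-length of its spatial projection is strictly less than $\tau_s - \tau_r$, and since this length bounds $d_{-1}(x_r, x_s)$ from above one obtains $d_{-1}(x_r, x_s) < \tau_s - \tau_r$. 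Combining these two relations gives
\[ d_{-1}(x_q, x_s) > d_{-1}(x_q, x_r) + d_{-1}(x_r, x_s), \]
in contradiction to the triangle inequality in $\overline{M}_{-1}$.

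I do not anticipate a serious obstacle here: once one commits to the conformal reduction the argument collapses to elementary Riemannian geometry on a Hadamard space. The only point requiring attention is to confirm that past null geodesics in $(M,h)$ emanating from $r$ or $s$ are actually defined long enough to reach the putative meeting point $q$, which is immediate from the absence of a finite past $\tau$-boundary and is the only place the no-particle-horizons hypothesis enters.
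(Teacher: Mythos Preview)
Your proof is correct and follows essentially the same approach as the paper: both pass to the conformal metric $h=-d\tau^2+\overline g_{-1}$ and exploit that hyperbolic space has no cut locus, so that past null geodesics from a point $p$ hit exactly the set $\{\tau_q=\tau_p-d_{-1}(x_p,x_q)\}$. The only cosmetic difference is that the paper centers spherical normal coordinates at $\overline s$ and uses the push-up property (the intersection point would lie in $I^-(s)$, yet the past null cone of $s$ never enters $I^-(s)$), whereas you track both null geodesics explicitly and close with the triangle inequality; these are two phrasings of the same estimate. One small remark: your final comment that the no-particle-horizons hypothesis is needed so the null geodesics ``reach $q$'' is not quite right---since $q\in M$ already has $\tau_q$ in the allowed range, both geodesics reach it regardless---so in fact the hypothesis is not used in this proposition at all.
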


\begin{proof}
Again, this is a statement about the conformal class of $(M,g)$. We set $\tau(t) = \int_1^t \frac{1}{a(t')} \, dt'$ and use it as a coordinate instead of $t$, introduce spherical normal coordinates around $\overline{s} \in \overline{M}_{-1}$, where $s = (\tau_s, \overline{s})$, and work with the conformal metric $h = - d\tau^2 + dr^2 + \sinh^2(r) g_{\Sp^{d-1}}$. The past directed null geodesics from $s$ are then given in $(\tau, r, \omega)$-coordinates by 
\begin{equation}
\label{EqPNG}
[0, \infty) \ni \rho \mapsto (\tau_s - \rho, \rho, \omega_0)
\end{equation}
for some $\omega_0 \in \Sp^{d-1}$ which labels the angular direction of the null geodesic emanating from $s$. If a past directed null geodesic starting from $r \in I^-(s,M)$ intersected one starting from $s$, then, by the push-up property, the point of intersection would lie in the \emph{timelike} past of $s$. However, parameterising any past directed timelike curve starting from $s$ by the $\tau$-coordinate, it is immediate that the spatial radial distance traversed has to be \emph{strictly less} than the $\tau$-coordinate time elapsed. This, however, is a contradiction to \eqref{EqPNG}.
\end{proof}

The following crucial proposition requires the additional assumption from Theorem \ref{Thm2} on the scale factor and spells out the main geometric obstruction to $C^0$-extendibility of these spacetimes which is used in the proof of Theorem \ref{Thm2}.

\begin{proposition} \label{PropBlowUp}
Let $d \geq 2$ and consider any $(d+1)$-dimensional FLRW spacetime $(M,g)$ as defined in the introduction  with $K=-1$ and without particle horizons. Moreover, assume the scale factor satisfies $a(t) \cdot e^{\int_t^1 \frac{1}{a(t')} \, dt'} \to \infty$ for $t \to 0$.  Consider a point $\overline{p} \in \overline{M}_{-1}$ together with spherical normal coordinates centred at $\overline{p}$ for $\overline{M}_{-1}$ as in \eqref{EqMetricNormal}. Set $\tau(t) := \int_1^t \frac{1}{a(t')} \, dt'$, $u := r + \tau$, and define $\Sigma_{t_0} := \{t = t_0\}$. Let $\gamma_i : (0,1] \to M$, $i= 1,2$, be two future directed and past inextendible timelike curves parametrised by the time coordinate $t$ with $\lim_{s \to 0} u \big(\gamma_i(s) \big) > - \infty$ for $i = 1,2$ and $\lim_{s \to 0} \omega \big( \gamma_i(s) \big) = \omega_i  \in \Sp^{d-1}$ with $\omega_1 \neq \omega_2$. Finally, let $q \in M$ be a point with $\Ima(\gamma_i) \subseteq M \setminus J^-(q,M)$ for $i=1,2$.\footnote{The existence of such a point $q$ follows since $\lim_{s \to 0} u \big( \gamma_i(s) \big) > - \infty$.}  We then have
\begin{equation} \label{Late}
d_{\Sigma_{t_0} \setminus J^-(q,M)} \big(\gamma_1(t_0), \gamma_2(t_0)\big) \to \infty \quad \textnormal{ for } t_0 \to 0 \;,
\end{equation}
where the distance in $\Sigma_{t_0} \setminus J^-(q,M)$ is with respect to the induced Riemannian metric on $\Sigma_{t_0} \setminus J^-(q,M)$.
\end{proposition}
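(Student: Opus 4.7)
The plan is to pass to the conformally rescaled metric $h := -d\tau^2 + \overline{g}_{-1}$, which shares the causal structure of $g$, and to exploit the exact identification of the obstacle as a $\overline{g}_{-1}$-ball of large radius. Writing $q = (\tau_q, \overline{q})$ in product coordinates on $(M,h)$, one has
$$J^-(q,M) \cap \Sigma_{t_0} \;=\; \overline{B}_{R(t_0)}(\overline{q}), \qquad R(t_0) := \tau_q - \tau(t_0) = \tau_q + \int_{t_0}^1 \tfrac{1}{a(t')}\,dt' \to \infty.$$
By Proposition \ref{PropTIFs}~a) we have $r(\gamma_i(t_0)) \to \infty$ and $\omega(\gamma_i(t_0)) \to \omega_i$ as $t_0 \to 0$, so the sequences $\overline{\gamma}_i(t_0)$ in $\overline{M}_{-1} \cong \mathbb{H}^d$ converge to \emph{distinct} ideal boundary points $\xi_1 \neq \xi_2$.

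Next I introduce spherical normal coordinates $(\rho, \omega')$ centred at $\overline{q}$, in which $\overline{g}_{-1} = d\rho^2 + \sinh^2(\rho)\,g_{\Sp^{d-1}}$. Since $\gamma_i(t_0) \notin J^-(q,M)$, the $\overline{q}$-radial coordinate of $\overline{\gamma}_i(t_0)$ exceeds $R(t_0)$; its $\overline{q}$-angular coordinate $\omega'(\overline{\gamma}_i(t_0))$ converges to the $\overline{q}$-visual direction $\omega'_{i,\infty} \in \Sp^{d-1}$ of the ideal point $\xi_i$. Because the identification of the visual sphere at $\overline{q}$ with $\partial_\infty \mathbb{H}^d$ is a homeomorphism, $\omega'_{1,\infty} \neq \omega'_{2,\infty}$; setting $\theta := d_{\Sp^{d-1}}(\omega'_{1,\infty}, \omega'_{2,\infty}) > 0$, we have $d_{\Sp^{d-1}}(\omega'(\overline{\gamma}_1(t_0)), \omega'(\overline{\gamma}_2(t_0))) \geq \theta/2$ for $t_0$ small enough.

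The core estimate is the elementary hyperbolic length bound: any piecewise smooth curve $\sigma \subseteq \Sigma_{t_0} \setminus J^-(q,M)$ from $\gamma_1(t_0)$ to $\gamma_2(t_0)$ lies entirely in $\{\rho \geq R(t_0)\}$, whence
$$\mathrm{length}_{\overline{g}_{-1}}(\sigma) = \int \sqrt{\dot\rho^2 + \sinh^2(\rho)\,\|\dot\omega'\|^2_{g_{\Sp^{d-1}}}}\,ds \;\geq\; \sinh(R(t_0))\cdot d_{\Sp^{d-1}}\!\bigl(\omega'(\overline{\gamma}_1(t_0)), \omega'(\overline{\gamma}_2(t_0))\bigr).$$
Multiplying by the conformal factor $a(t_0)$ and taking the infimum over all such $\sigma$ gives
$$d_{\Sigma_{t_0}\setminus J^-(q,M)}\bigl(\gamma_1(t_0), \gamma_2(t_0)\bigr) \;\geq\; \tfrac{\theta}{2}\, a(t_0)\sinh(R(t_0)).$$
Since $\sinh(R(t_0)) \sim \tfrac{1}{2}e^{\tau_q}\exp\bigl(\int_{t_0}^1 \tfrac{1}{a(t')}\,dt'\bigr)$ as $t_0 \to 0$, the additional hypothesis $a(t)\,e^{\int_t^1 \frac{1}{a(t')}\,dt'} \to \infty$ forces the right-hand side to blow up, which is exactly \eqref{Late}.

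The main technical point I expect is carefully justifying the convergence of the $\overline{q}$-angular coordinates to two distinct limits on $\Sp^{d-1}$. This can be done either via the classical homeomorphism between the visual sphere at $\overline{q}$ and $\partial_\infty \mathbb{H}^d$, or bare-handedly from the hyperbolic law of cosines $\cosh(c) = \cosh(a)\cosh(b) - \sinh(a)\sinh(b)\cos\gamma$ applied at both $\overline{p}$ and $\overline{q}$, together with the triangle inequality bound $|d(\overline{q}, \overline{\gamma}_i(t_0)) - d(\overline{p}, \overline{\gamma}_i(t_0))| \leq d(\overline{p}, \overline{q})$ used to compare $\overline{p}$- and $\overline{q}$-radial coordinates. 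Beyond this, the remaining calculation is routine, and the role of the hypothesis on $a$ is transparent: dropping it allows $a(t_0)\sinh(R(t_0))$ to stay bounded, which is precisely the regime left open by Theorem \ref{Thm2}.
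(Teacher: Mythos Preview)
Your argument is correct, but the paper's proof sidesteps the main technical point you flag. Rather than working with the obstacle $J^-(q,M)$ directly and changing to $\overline{q}$-centred normal coordinates, the paper first uses the absence of particle horizons to find $t_p>0$ with $p:=(t_p,\overline{p})\in J^-(q,M)$, where $\overline{p}$ is the centre of the \emph{already given} normal coordinates. Then $\Sigma_{t_0}\setminus J^-(q,M)\subseteq \Sigma_{t_0}\setminus J^-(p,M)=\{t_0\}\times\{r>\int_{t_0}^{t_p}\tfrac{1}{a(t')}\,dt'\}$, and it suffices to bound the distance in the latter from below. The obstacle is now a ball centred at the origin of the coordinates in which the hypothesis $\omega_1\neq\omega_2$ is stated, so the angular separation $d_{\Sp^{d-1}}(\omega(\gamma_1(t_0)),\omega(\gamma_2(t_0)))\geq\varepsilon>0$ follows immediately from the convergence $\omega(\gamma_i(t_0))\to\omega_i$ with no appeal to the ideal boundary of $\mathbb{H}^d$ or to hyperbolic trigonometry. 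The remaining length estimate is then identical to yours.

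Your route has the conceptual merit of making explicit that the distinct endpoints at infinity are what drives the blow-up, but the paper's trick of enlarging the excised region to one centred at $\overline{p}$ buys a substantially shorter and more self-contained proof.
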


\begin{proof}
Since there are no particle horizons, there exists $t_p>0$ such that $p = (t_p, \overline{p}) \in J^-(q,M)$. Thus $J^-(p,M) \subseteq J^-(q,M)$ and $\Sigma_{t_0} \setminus J^-(q,M) \subseteq \Sigma_{t_0} \setminus J^-(p,M)$. It suffices to show that $$d_{\Sigma_{t_0} \setminus J^-(p,M)} \big(\gamma_1(t_0), \gamma_2(t_0)\big) \to \infty \quad \textnormal{ for } t_0 \to 0 \;.$$
In the spherical normal coordinates centred at $\overline{p}$ we have $\Sigma_{t_0} \setminus J^-(p,M) = \{t_0\} \times \{r > \int_{t_0}^{t_p} \frac{1}{a(t')} \, dt'\}$ and the induced metric on this hypersurface is
\begin{equation}\label{EqBoundMetric}
a(t_0)^2\big[ dr^2 + \sinh^2(r) g_{\Sp^{d-1}} \big] \geq a(t_0)^2 \big[ dr^2 + \sinh^2\big( \int_{t_0}^{t_p} \frac{1}{a(t')} \, dt' \big) \cdot g_{\Sp^{d-1}} \big] \;.
\end{equation}
Note that
\begin{equation}\label{EqSinh}
 a(t_0) \sinh\big( \int_{t_0}^{t_p} \frac{1}{a(t')} \, dt' \big) = a(t_0) \frac{e^{\int_{t_0}^{t_p} \frac{1}{a(t')} \,dt'} - e^{- \int_{t_0}^{t_p} \frac{1}{a(t')} \, dt'}}{2} \to \infty \quad \textnormal{ for } t_0 \to 0
\end{equation}
by the assumption on the scale factor. Since $\omega_1 \neq \omega_2$ there exists $\varepsilon>0$ and $\tau_0 >0$ such that 
\begin{equation}
\label{EqSphDist}
d_{\Sp^{d-1}}\big(\omega(\gamma_1(t_0)), \omega(\gamma_2(t_0)) \big) \geq \varepsilon >0 \quad \textnormal{ for all } 0 < t_0 < \tau_0 \;.
\end{equation}
Finally, by the orthogonal form of the metric it is immediate that $$d_{\Sigma_{t_0} \setminus J^-(p,M)} \big(\gamma_1(t_0), \gamma_2(t_0)\big) \geq d_{a(t_0) \sinh \big( \int_{t_0}^{t_p} \frac{1}{a(t')} \, dt' \big) \cdot \Sp^{d-1}} \big( \omega(\gamma_1(t_0)), \omega(\gamma_2(t_0)) \big) \to \infty $$
for $t_0 \to 0$ by \eqref{EqBoundMetric}, \eqref{EqSinh}, and \eqref{EqSphDist}.
\end{proof}

Note that the exclusion of $J^-(q,M)$ in \eqref{Late} is crucial: the distance in $\Sigma_{t_0}$ would go to zero.

\section{Proof of the main theorems}

\subsection{Proof of Theorem \ref{Thm1}: $K=+1$ without particle horizons} \label{SecThm1}

\begin{proof}[Proof of Theorem \ref{Thm1}:]
The proof is by contradiction. Assume there exists a continuous extension $\iota : M \hookrightarrow \tilde{M}$ such that $\partial^-\iota(M) \neq \emptyset$. Let $\tilde{p}$ be a past boundary point. By Proposition \ref{PropBoundaryChart} there exists a past  boundary chart $\tilde{\varphi} : \tilde{U} \to (-\varepsilon_0, \varepsilon_0) \times (-\varepsilon_1, \varepsilon_1)^d$ around $\tilde{p}$. Consider the future directed timelike curve $\tilde{\gamma} : [0, \varepsilon_0) \to \tilde{U}$ given in coordinates by $s \mapsto (s,\underline{0})$. Then $\tilde{\gamma}|_{(0,\varepsilon_0)}$ corresponds to a future directed and past inextendible timelike curve $\gamma := \iota^{-1} \circ \tilde{\gamma}|_{(0,\varepsilon_0)}$ in $M$. It thus follows that we must have $t(\gamma(s)) \to 0$ as $s \to 0$. 

If necessary, we make $\varepsilon_1>0$ smaller to ensure that 
\begin{equation}
\label{PfK1SetUp}
(\frac{19}{20} \varepsilon_0, \varepsilon_0) \times (-\varepsilon_1, \varepsilon_1)^d \subseteq I^+_{\tilde{g}_n}\big((\frac{\varepsilon_0}{2}, \underline{0}), \tilde{U}\big)\;,
\end{equation} 
see also Figure \ref{FigK1}.

We now choose $0<s_0 < \frac{\varepsilon_0}{2}$ sufficiently small such that $J_{\tilde{g}_w}^+\big(0, \tilde{U}\big) \cap J_{\tilde{g}_w}^-\big((s_0, \underline{0}), \tilde{U}\big)$ is compact in $\tilde{U}$.
We claim that the following holds:
\begin{equation}
\label{PfEqualityDiamond}
\iota \Big[ \bigcup_{0<s<s_0} I^+\big(\gamma(s), M\big) \cap I^- \big(\gamma(s_0),M\big) \Big] = I^+\big(\tilde{p}, \tilde{U} \big) \cap I^- \big(\tilde{\gamma}(s_0), \tilde{U}\big) \;.
\end{equation}
To prove this we first observe that since $\tilde{\gamma}$ is a \emph{smooth} timelike curve in $(\tilde{U}, \tilde{g})$, and, moreover, since the timelike future defined with respect to piecewise smooth timelike curves is open, we have $$I^+\big(\tilde{p}, \tilde{U} \big) = \bigcup_{0<s<s_0} I^+\big(\tilde{\gamma}(s), \tilde{U}\big)\;,$$ see also Proposition 2.7 in \cite{Sbie15}. It thus remains to prove $$\iota \big[I^+\big(\gamma(s), M\big) \cap I^- \big(\gamma(s_0),M\big)\big] =  I^+\big(\tilde{\gamma}(s), \tilde{U}\big)\cap I^- \big(\tilde{\gamma}(s_0), \tilde{U}\big)$$
for any $0<s<s_0$. The inclusion $\supseteq$ follows directly from  property \eqref{PropF1} of Proposition \ref{PropBoundaryChart} and \eqref{EqAddPropBC} (taking the time-reversal into account when translating from future boundary charts to past boundary charts). And the inclusion $\subseteq$ follows from Proposition \ref{PropFutureOne} together with Lemma \ref{LemCausalHomotopy}. This proves the claim.

By Proposition \ref{PropK1} it now follows that there exists a $0<t_0$ such that 
\begin{equation}
\label{PfContra}
\iota\big( \{t_0\} \times \Sp^d \big) \subseteq I^+\big(\tilde{p}, \tilde{U} \big) \cap I^- \big(\tilde{\gamma}(s_0), \tilde{U}\big)\;.
\end{equation}
\begin{figure}[h]
\centering
 \def\svgwidth{7cm}
   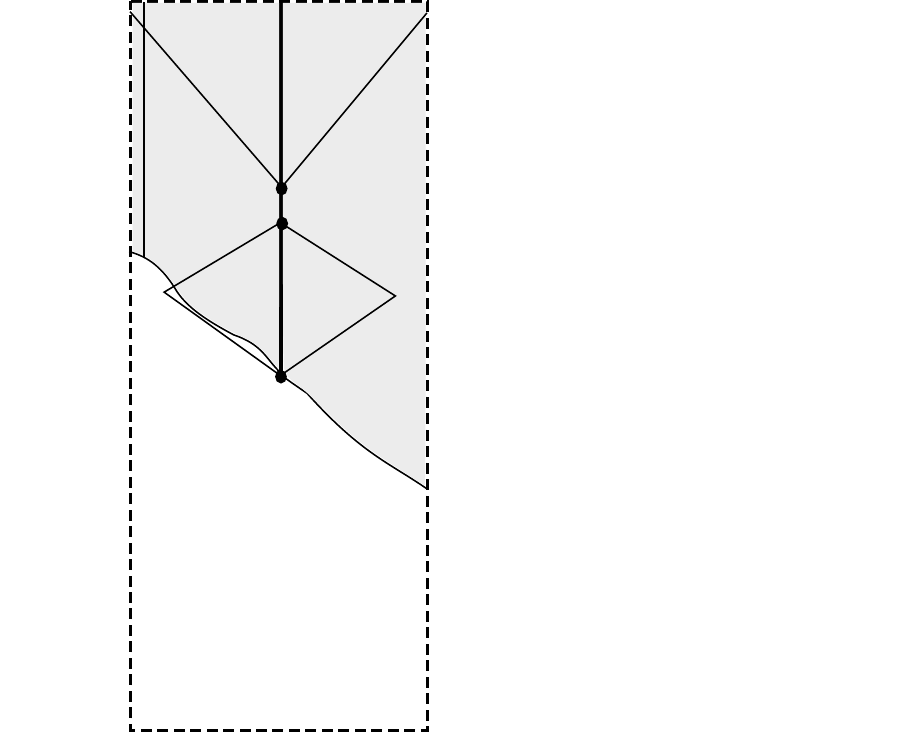
      \caption{The proof of Theorem \ref{Thm1}.} \label{FigK1}
\end{figure}

Since  the set $I^+\big(\tilde{p}, \tilde{U} \big) \cap I^- \big(\tilde{\gamma}(s_0), \tilde{U}\big)$ is compactly contained in $\tilde{U}$, there is an $\underline{x}_0 \in (-\varepsilon_1, \varepsilon_1)^d$ such that the timelike curve $\tilde{\sigma} : \big(f(\underline{x}_0), \varepsilon_0) \to \tilde{U}$, given by $s \mapsto (s, \underline{x}_0)$, does not enter this set. Here, $f$ is the graphing function from Proposition \ref{PropBoundaryChart}. The curve $\sigma:=\iota^{-1} \circ \tilde{\sigma}$ is now a future directed timelike curve in $M$ which is past-inextendible, i.e., we have $t(\sigma(s)) \to 0$ as $s \to f(\underline{x}_0)$. To the future it contains a point in $I^+\big(\gamma(s_0),M\big) \subseteq I^+\big(\{t_0\} \times \Sp^d,M\big)$ by \eqref{PfK1SetUp}, but it does not intersect the Cauchy hypersurface $\{t_0\} \times \Sp^d$ by \eqref{PfContra}. This is a contradiction, which concludes the proof of Theorem \ref{Thm1}.
\end{proof}

Let us remark that another way to prove Theorem \ref{Thm1} is to follow the first steps of the proof of Theorem \ref{Thm2} below to show that there is a point $\tilde{s} \in \tilde{U}_>$ and points $\tilde{p}, \tilde{q} \in \gr(f)$ such that $I^-(\tilde{s}, \tilde{U}) \cap I^+(\tilde{p}, \tilde{U})$ and $I^-(\tilde{s}, \tilde{U}) \cap I^+(\tilde{q}, \tilde{U})$ are non-empty and not the same. Using the future one-connectedness of $(M,g)$ together with Lemma \ref{LemCausalHomotopy} one concludes, as in the above proof, that these sets correspond, via $\iota^{-1}$, to causal diamonds of past directed and past inextendible timelike curves in $M$ starting from $s = \iota^{-1}(\tilde{s})$. This, however, is a contradiction to Proposition \ref{PropK1} by which all such sets have to be the same.

\subsection{Proof of Theorem \ref{Thm2}: $K=-1$ without particle horizons} \label{SecThm2}

\begin{proof}
The proof is by contradiction. Assume there exists a continuous extension $\iota : M \hookrightarrow \tilde{M}$ such that $\partial^-\iota(M) \neq \emptyset$. 

\textbf{Step 1: The set-up.} By the time-reversal of Proposition \ref{PropSpacelikeB} there exists a $\tilde{p} \in \rd^- \iota(M)$ together with a past boundary chart $\tilde{\varphi} : \tilde{U} \to (- \varepsilon_0, \varepsilon_0) \times (-\varepsilon_1, \varepsilon_1)^d$ around $\tilde{p}$ such that the graphing function $f: (-\varepsilon_1, \varepsilon_1)^d \to (-\varepsilon_0, \varepsilon_0)$ is differentiable at $0$ with $\rd_1 f(0) = 0$. If necessary, we make $\varepsilon_1>0$ smaller such that
\begin{equation}
\label{PfFutureCon}
\{ \frac{18}{20} \varepsilon_0 < x_0 < \varepsilon_0 \} \subseteq I^+_{\tilde{g}_n}(\tilde{p}, \tilde{U}) \;.
\end{equation}
It follows from $\rd_1f(0) = 0$ that there exists $0<\mu \leq \varepsilon_1$ such that
\begin{equation}
\label{PfGraphCone}
|f(x_1, 0, \ldots, 0)| < \frac{1}{10} |x_1| \quad \textnormal{ for all } |x_1| < \mu \;.
\end{equation}
Now choose $\varepsilon_0 > \alpha > 0$ such that for $\tilde{s} :=( \alpha, \underline{0})$ and $\tilde{t}:= (-\alpha, \underline{0})$ the set $J^+_{\tilde{g}_w}(\tilde{t}, \tilde{U}) \cap J_{\tilde{g}_w}^-(\tilde{s}, \tilde{U})$ is compact in $\tilde{U}$.
\begin{figure}[h]
\centering
 \def\svgwidth{7cm}
   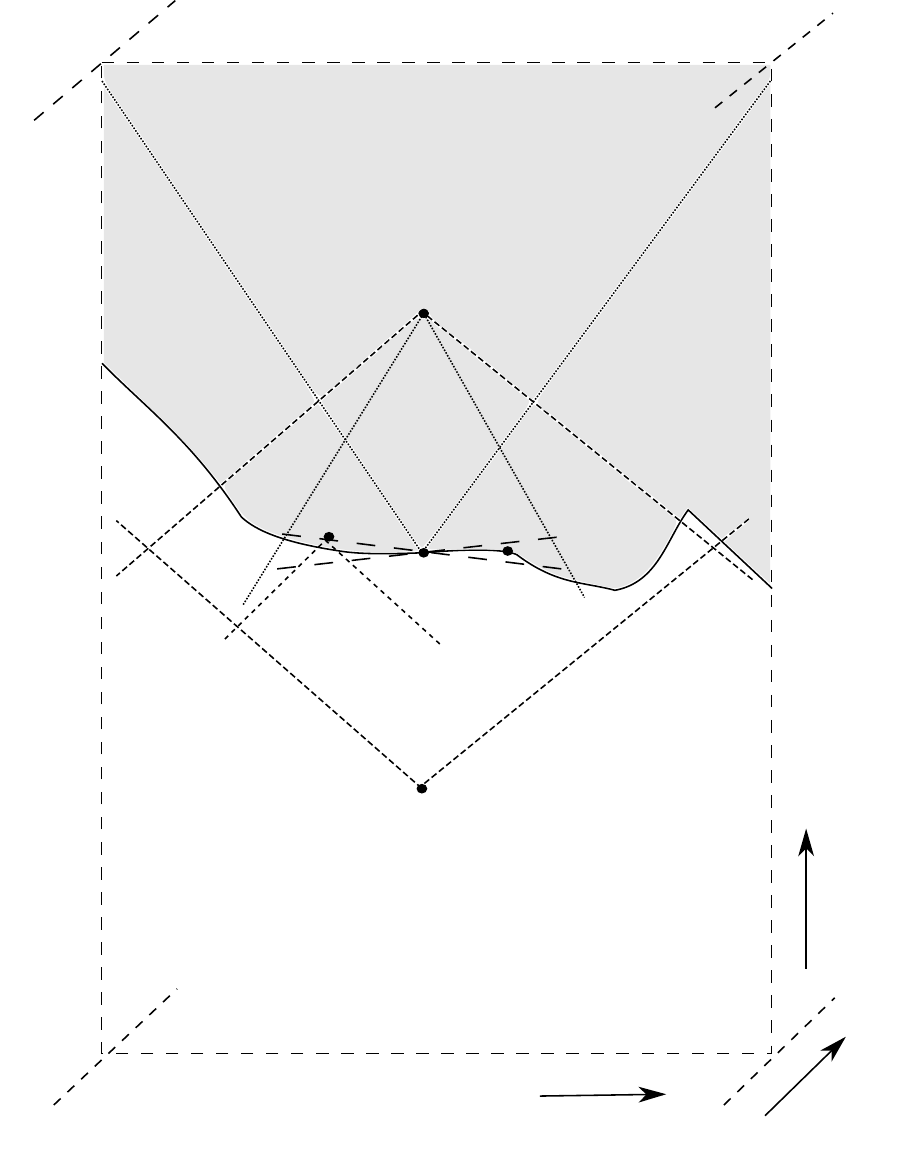
     \caption{The set-up in the proof of Theorem \ref{Thm2} together with lightcone bounds.} \label{FigK-1}
\end{figure}
Moreover choose $0 < \beta < \mu$ so small such that
\begin{equation}
\label{EqQR}
\begin{split}
&\tilde{q}:= \big( f(\beta, 0, \ldots, 0), \beta, 0, \ldots, 0\big) \in I^-_{\tilde{g}_n}(\tilde{s}, \tilde{U}) \cap J^+_{\tilde{g}_w}(\tilde{t}, \tilde{U}) \\
\textnormal{and } \; &\tilde{r}:= \big(\frac{1}{10} \beta, - \beta, 0, \ldots, 0\big) \in I_{\tilde{g}_n}^-(\tilde{s}, \tilde{U}) \;.
\end{split}
\end{equation}
Note that by the bound \eqref{PfGraphCone} we have $\tilde{r} \in \tilde{U}_>$. Basic trigonometry using the lightcone bound $\tilde{g} \prec \tilde{g}_w$ gives that 
\begin{equation}
\label{EqA}
I^+(\tilde{p}, \tilde{U}) \cap I^-(\tilde{s}, \tilde{U}) \; \textnormal{ and } \; I^+(\tilde{q}, \tilde{U}) \cap I^-(\tilde{s}, \tilde{U}) \; \textnormal{ are non-empty and neither set is contained in the other;}
\end{equation}
and
\begin{equation}
\label{EqB}
J^-(\tilde{r}, \tilde{U}_>) \; \textnormal{ does not contain the set } \; \{(x_0, x_1, 0, \ldots, 0) \in \tilde{U} \; | \; 0 \leq x_1 \leq \beta, \; x_0 > f(x_1, 0, \ldots, 0)\} \;.
\end{equation}
We also define the past directed timelike curves $\tilde{\gamma}_i : [0,1] \to \tilde{U}$, $i=1,2$, in the local coordinates given by $\tilde{\varphi}$, by $\tilde{\gamma}_1^\mu(\tau) = \tilde{s}^\mu + \tau (\tilde{p}^\mu - \tilde{s}^\mu)$ and by $\tilde{\gamma}_2^\mu(\tau) = \tilde{s}^\mu + \tau( \tilde{q}^\mu - \tilde{s}^\mu)$. These curves are straight lines in the $(x_0, x_1)$-plane from $\tilde{s}$ to $\tilde{p}$ and $\tilde{q}$, respectively. Note that by \eqref{EqQR} the curve $\tilde{\gamma}_2$ is indeed timelike; and from the achronality of $\gr (f)$ it follows that $\tilde{\gamma}_i|_{[0,1)}$ map into $\tilde{U}_>$. Thus, $\gamma_i := \iota^{-1} \circ \tilde{\gamma}_i|_{[0,1)}$, $i=1,2$, are past directed and past inextendible timelike curves in $M$ starting from $s:= \iota^{-1}(\tilde{s}) \in M$.

\textbf{Step 2: Relating the set-up in $\tilde{U}$ to $M$.} 
As in the proof of Theorem \ref{Thm1}, the future one-connectedness of $(M,g)$, i.e., Proposition \ref{PropFutureOne}, together with Lemma \ref{LemCausalHomotopy} gives
\begin{equation*}
\iota \Big[ \bigcup_{0<\tau<1} I^+\big(\gamma_1(\tau), M\big) \cap I^- \big(s,M\big) \Big] = I^+\big(\tilde{p}, \tilde{U} \big) \cap I^- \big(\tilde{s}, \tilde{U}\big)
\end{equation*}
and 
\begin{equation*}
\iota \Big[ \bigcup_{0<\tau<1} I^+\big(\gamma_2(\tau), M\big) \cap I^- \big(s,M\big) \Big] = I^+\big(\tilde{q}, \tilde{U} \big) \cap I^- \big(\tilde{s}, \tilde{U}\big) \;.
\end{equation*}
It thus follows from \eqref{EqA} that neither $ \bigcup_{0<\tau<1} I^+\big(\gamma_1(\tau), M\big)$ is contained in $ \bigcup_{0<\tau<1} I^+\big(\gamma_2(\tau), M\big) $, nor the other way around.
Let $s = (s_t, \overline{s})$ and introduce spherical normal coordinates around $\overline{s}$ for $\overline{M}_{-1}$ as in \eqref{EqMetricNormal}. We now apply Proposition \ref{PropTIFs}. By part $a)$ $i)$ and $b)$ we cannot have $\lim_{\tau \to 0} u \big(\gamma_i(\tau) \big) = -\infty$ for $i=1$ nor for $i=2$, since otherwise at least one of the TIFs would equal $M$ and thus contain the other. Thus, by  part $a)$ the limits $\lim_{\tau \to 0} u\big(\gamma_i(\tau)\big) = u_i > -\infty$ and $\lim_{\tau \to 0} r \big(\gamma_i(\tau) \big) = \infty$ and $\lim_{\tau \to 0} \omega \big(\gamma_i(\tau) \big) = \omega_i \in \Sp^{d-1}$ exist for $i = 1,2$. By part $c)$ we must have $\omega_1 \neq \omega_2$.

Consider now the point $r := \iota^{-1}(\tilde{r}) \in M$. We claim that
\begin{equation}
\label{EqClai}
\begin{split}
\Ima(\gamma_1) &\cup \Ima(\gamma_2) \subseteq M \setminus J^-(r,M) \\ &\textnormal{ and for some } \varepsilon>0 \\ \iota^{-1} \Big( \{(x_0, x_1, 0, \ldots, 0) \in \tilde{U}_> &\; | \; 0 \leq x_1 \leq \beta, \; 0<x_0 - f(x_1, 0, \ldots, 0) < \varepsilon\} \Big) \subseteq M \setminus J^-(r,M)\;.
\end{split}
\end{equation}
By \eqref{EqQR} we have $r \in I^-(s,M)$ and by Proposition \ref{PropNGI} no past directed null geodesic emanating from $r$ intersects a past directed null geodesic emanating from $s$. We now apply the time-reversal of Proposition \ref{PropIntNG} with our choice of $\tilde{t}$ and $\tilde{s}$ to infer that
\begin{equation}
\label{EqinM}
\iota\big(J^-(r,M) \big) \cap \Big( \big[ J^-(\tilde{s}, \tilde{U}_>) \cap J^+_{\tilde{g}_w}(\tilde{t}, \tilde{U}) \big] \setminus J^-(\tilde{r}, \tilde{U}_>) \Big) = \emptyset \;.
\end{equation}
The basic lightcone bound $\tilde{g} \prec \tilde{g}_w$ gives that  $$\Ima(\tilde{\gamma}_i|_{[0,1)}) \subseteq \big[ J^-(\tilde{s}, \tilde{U}_>) \cap J^+_{\tilde{g}_w}(\tilde{t}, \tilde{U}) \big] \setminus J^-(\tilde{r}, \tilde{U}_>) $$ so that the first part of \eqref{EqClai} follows from \eqref{EqinM}. For the second part we note that there exists $\varepsilon>0$ such that $$\{(x_0, x_1, 0, \ldots, 0) \in \tilde{U}_> \; | \; 0 \leq x_1 \leq \beta, \; 0<x_0 - f(x_1, 0, \ldots, 0) < \varepsilon\}  \subseteq J^-(\tilde{s}, \tilde{U}_>) \cap J^+_{\tilde{g}_w}(\tilde{t}, \tilde{U}) \;.$$ Thus, the second part of \eqref{EqClai} follows from  \eqref{EqB} and \eqref{EqinM}.

The first part of \eqref{EqClai}, together with $\omega_1 \neq \omega_2$ and $u_i > - \infty$ for $i=1,2$, puts us in the setting of Proposition \ref{PropBlowUp} (with $q=r$) and thus we infer that 
\begin{equation}
\label{EqBlowUp}
d_{\Sigma_{t_0} \setminus J^-(r,M)} \big(\gamma_1(t_0), \gamma_2(t_0)\big) \to \infty \quad \textnormal{ for } t_0 \to 0 \;,
\end{equation}
where $\Sigma_{t_0} = \{t=t_0\}$ as in Proposition \ref{PropBlowUp} and where we have reparameterised $\gamma_1$ and $\gamma_2$ by $t$.

\textbf{Step 3: Showing that \eqref{EqBlowUp} is in contradiction with the set-up in $\tilde{U}$.} Consider the point $\tilde{p}' := (\min\{\frac{1}{20} \varepsilon_0, \frac{3}{4} \alpha\}, \underline{0})$, and set $p':= \iota^{-1} (\tilde{p}')$. It then follows from \eqref{PfFutureCon} that
\begin{equation}
\label{PfFutureCon2}
I^+_{\tilde{g}_n}(\tilde{p}', \tilde{U}) \supseteq \{ \frac{19}{20} \varepsilon_0 < x_0 < \varepsilon_0\} \;.
\end{equation}
We now claim that
\begin{equation}
\label{FinalClaim}
\parbox{0.86\textwidth}{\textnormal{there exists $C>0$, depending only on $\delta, \varepsilon_1 >0$, such that for any $t_0>0$ sufficiently small, one can connect $\Ima(\gamma_1) \cap \Sigma_{t_0}$ to $\Ima(\gamma_2) \cap \Sigma_{t_0}$ by a smooth curve $\sigma_{t_0} : [0,1] \to \Sigma_{t_0} \setminus J^-(r,M)$ of length $L(\sigma_{t_0}) \leq C$.}}
\end{equation}
This claim is obviously in contradiction to \eqref{EqBlowUp} for $t_0>0$ small enough. It thus remains to prove \eqref{FinalClaim}.

We first choose $t_0$ small enough such that $p' \in I^+(\Sigma_{t_0}, M)$. By \eqref{PfFutureCon2} in combination with (the time reversal of) \eqref{PropF1} and \eqref{EqAddPropBC} this implies $ I^+(\Sigma_{t_0}, M) \supseteq \iota^{-1}(\{ \frac{19}{20} \varepsilon_0 < x_0 < \varepsilon_0\})$. Thus, for any $\underline{x} \in (-\varepsilon_1, \varepsilon_1)^d$ the curve $\big( f(\ux),\varepsilon_0 \big) \ni s \mapsto (s, \ux)$ is a future directed and past inextendible timelike curve in $M$ which contains points in $I^+(\Sigma_{t_0}, M)$. It thus has to intersect the Cauchy hypersurface $\Sigma_{t_0}$. Hence, there is a function $h_{t_0} : (-\varepsilon_1, \varepsilon_1)^d \to (- \varepsilon_0, \varepsilon_0)$ such that $\iota(\Sigma_{t_0}) \cap \tilde{U}_>$ is the graph of $h_{t_0}$. Since $\iota(\Sigma_{t_0})$ is a smooth hypersurface, $h_{t_0}$ is smooth. Moreover, for $i=1, \ldots, d$, the vector $\rd_i h_{t_0}(\ux) \cdot \rd_0 + \rd_i \in T_{\big(h_{t_0}(\ux), \ux \big)} \tilde{U}$ is clearly tangent to $\iota(\Sigma_{t_0}) \cap \tilde{U}$. Since all tangents of $\iota(\Sigma_{t_0}) \cap \tilde{U}$ have to be spacelike we obtain from the lightcone bound $\tilde{g}_n \prec \tilde{g}$ that $\frac{1}{|\rd_i h_{t_0}(\ux)|} > \frac{1}{\sqrt{2}}$, that is
\begin{equation}
\label{EqBoundH}
|\rd_i h_{t_0}(\ux)| < \sqrt{2} \quad \textnormal{ for all } i \in \{1, \ldots, d\} \textnormal{ and } \ux \in (-\varepsilon_1, \varepsilon_1)^d \;.
\end{equation}
Since $s \in I^+(p',M)$, the past directed timelike curves $\gamma_i$, $i=1,2$, have to intersect $\Sigma_{t_0}$ as well. Let $$\Ima(\tilde{\gamma}_2) \cap  \big(\iota(\Sigma_{t_0}) \cap \tilde{U}_> \big) = \big(h_{t_0}(x_{2, t_0}, 0, \ldots, 0), x_{2, t_0}, 0, \ldots, 0\big) \quad \textnormal{ with } 0 < x_{2, t_0} < \beta < \varepsilon_1 \;.$$
We then define a smooth curve $\tilde{\sigma}_{t_0} : [0,1] \to \iota(\Sigma_{t_0}) \cap \tilde{U}_>$ by $$\tilde{\sigma}_{t_0} (\tau) = \big( h_{t_0}(\tau \cdot x_{2,t_0}, 0, \ldots, 0), \tau \cdot x_{2, t_0}, 0 \ldots, 0 \big) \;.$$ We compute $\dot{\tilde{\sigma}}_{t_0}(\tau) = \rd_1 h_{t_0}(\tau \cdot x_{2, t_0}, 0, \ldots, 0) \cdot x_{2, t_0} \cdot \rd_0 + x_{2, t_0} \cdot \rd_1$. Moreover, we have $|\tilde{g}_{\mu \nu}| \leq 1 + \delta$ for all $\mu, \nu \in \{0, \ldots, d\}$ by Proposition \ref{PropBoundaryChart}. Together with \eqref{EqBoundH}  we thus obtain 
\begin{equation*}
L(\tilde{\sigma}_{t_0}) = \int_0^1 \sqrt{ \tilde{g} \big(\dot{\tilde{\sigma}}_{t_0}(\tau), \dot{\tilde{\sigma}}_{t_0}(\tau)\big)} \, d \tau \leq \sqrt{2 \varepsilon_1^2 \cdot(1 + \delta) + 2 \sqrt{2}  \cdot \varepsilon_1^2(1+ \delta) + \varepsilon_1^2(1+ \delta)}=: C < \infty \;.
\end{equation*}
By construction $\iota^{-1} \circ \tilde{\sigma}_{t_0} =: \sigma_{t_0}$ is a curve mapping into $\Sigma_{t_0}$. It remains to show that it does not map into $J^-(r,M)$.

For each $\ux \in (-\varepsilon_1, \varepsilon_1)^d$ the function $t_0 \mapsto h_{t_0}(\ux)$ is strictly monotonically increasing in $t_0$. Moreover, we have the pointwise limit $\lim_{t_0 \to 0} h_{t_0}(\ux) = f(\ux)$ for all $\ux \in (-\varepsilon_1, \varepsilon_1)^d$, since there cannot be a point in $M$ which lies to the past of all $\Sigma_{t_0}$. By Dini's Theorem, the convergence is indeed uniform. Thus, for given $\varepsilon>0$ we can choose $t_0>0$ small enough such that $$\Ima(\tilde{\sigma}_{t_0}) \subseteq \{(x_0, x_1, 0, \ldots, 0) \in \tilde{U}_> \; | \; 0 \leq x_1 \leq \beta, \; 0<x_0 - f(x_1, 0, \ldots, 0) < \varepsilon\}\;. $$ It now follows from the second part of \eqref{EqClai} that $\sigma_{t_0}$ does not map into $J^-(r,M)$. This concludes the proof of Theorem \ref{Thm2}.
\end{proof}

\appendix

\section{Lipschitz hypersurfaces and tangent cones} \label{SecApp}

For the convenience of the reader some elementary facts regarding Lipschitz hypersurfaces are collated here which are used in the proof of Proposition \ref{PropSpacelikeB}.

\begin{definition}
Let $A \subseteq \R^n$, $a \in \R^n$. The \emph{tangent cone} of $A$ at $a$ is defined by $$Tan(A,a) = \{ v \in \R^n \simeq T_a\R^n \; | \; \textnormal{there exist a sequence } A \ni x_k \to a \textnormal{ such that } v = ||v|| \lim_{k \to \infty} \frac{x_k -a}{||x_k -a||} \} \;.$$
\end{definition}
Clearly, $Tan(A,a)$ is a cone: if $v \in Tan(A,a)$, then so is $r \cdot v \in Tan(A,a)$ for $r >0$.

\begin{lemma} \label{LemChangeCoord}
Let $f : \R^m \to \R^n$ and $A \subseteq \R^m$. If $f$ is differentiable at $a \in A$, $Df|_a$ is injective, $f|_A$ is injective, and $(f|_A)^{-1}$ is continuous at $f(a)$, then $$df|_a\big(Tan(A,a)\big) = Tan \big( f(A), f(a)\big)\;.$$
\end{lemma}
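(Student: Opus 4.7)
The plan is to prove the two inclusions separately, both via direct arguments using the definition of tangent cone together with the first-order Taylor expansion of $f$ at $a$.

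For the inclusion $df|_a(\mathrm{Tan}(A,a)) \subseteq \mathrm{Tan}(f(A), f(a))$, I would start with $v \in \mathrm{Tan}(A,a)$. The case $v=0$ is trivial, so assume $v \neq 0$ and pick a sequence $A \ni x_k \to a$ with $\frac{x_k - a}{\|x_k-a\|} \to \frac{v}{\|v\|}$. Set $y_k := f(x_k)$. By continuity of $f$ at $a$, $y_k \to f(a)$. The key tool is the Taylor expansion $f(x_k) - f(a) = Df|_a(x_k - a) + o(\|x_k-a\|)$. Dividing by $\|x_k - a\|$ gives $\frac{y_k-f(a)}{\|x_k-a\|} \to Df|_a(v/\|v\|)$, which is nonzero by injectivity of $Df|_a$. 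Consequently $\|y_k - f(a)\|/\|x_k-a\| \to \|Df|_a(v/\|v\|)\|>0$, and dividing the previous limit by this positive scalar gives $\frac{y_k-f(a)}{\|y_k-f(a)\|} \to \frac{Df|_a(v)}{\|Df|_a(v)\|}$. Multiplying by $\|Df|_a(v)\|$ exhibits $Df|_a(v)$ as an element of $\mathrm{Tan}(f(A), f(a))$.

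For the reverse inclusion, I would take $w \in \mathrm{Tan}(f(A), f(a))$ (again, $w=0$ is trivial) with approximating sequence $f(A) \ni y_k \to f(a)$, write $y_k = f(x_k)$ with $x_k \in A$, and \emph{here} use the continuity of $(f|_A)^{-1}$ at $f(a)$ to conclude $x_k \to a$. Since the unit vectors $u_k := \frac{x_k - a}{\|x_k-a\|}$ lie in the compact unit sphere, pass to a convergent subsequence $u_{k_j} \to u$, $\|u\|=1$. The same Taylor argument as above yields $\frac{y_{k_j}-f(a)}{\|x_{k_j}-a\|} \to Df|_a(u)$, which is nonzero by injectivity of $Df|_a$, and therefore $\frac{y_{k_j}-f(a)}{\|y_{k_j}-f(a)\|} \to \frac{Df|_a(u)}{\|Df|_a(u)\|}$. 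Comparing with the assumed limit $\frac{w}{\|w\|}$ gives $\frac{w}{\|w\|} = \frac{Df|_a(u)}{\|Df|_a(u)\|}$. Setting $v := \frac{\|w\|}{\|Df|_a(u)\|}\, u$, the scaling relation $v = \|v\|\, u$ together with $u_{k_j}\to u$ and $x_{k_j}\to a$ shows $v \in \mathrm{Tan}(A,a)$, and by construction $df|_a(v) = w$.

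I expect the proof to be essentially mechanical once one has identified which hypothesis plays which role. The mild subtlety — and arguably the only nontrivial step — is the $\supseteq$ direction: one needs the continuity of $(f|_A)^{-1}$ at $f(a)$ to lift the approximating sequence $y_k \to f(a)$ in $f(A)$ back to a sequence $x_k \to a$ in $A$ (note that $f$ is only defined pointwise, not as a local homeomorphism, so this cannot be arranged just from differentiability and injectivity of $Df|_a$), and one then needs a subsequence extraction to obtain a unit direction. The injectivity of $Df|_a$ is used in both directions to guarantee that the relevant limit vector is nonzero, so that we may safely divide by its norm.
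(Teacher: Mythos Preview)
Your proof is correct and follows essentially the same approach as the paper: both directions are handled by the first-order expansion of $f$ at $a$, with injectivity of $Df|_a$ ensuring nonzero limits, and the $\supseteq$ direction using continuity of $(f|_A)^{-1}$ at $f(a)$ to lift the sequence followed by a compactness/subsequence argument on the unit sphere. Your write-up is slightly more explicit about the Taylor remainder and the final rescaling, but the structure and key ideas coincide with the paper's proof.
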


\begin{proof}
\underline{$``\subseteq"$:} Let $v \in Tan(A,a)$ and without loss of generality $v \neq 0$, $||v|| =1$. Let $A \ni x_k \to a$ such that $ \frac{x_k -a}{||x_k - a||} \to v$. Then 
\begin{equation*}
||df|_a(v)|| \cdot  \lim_{k \to \infty} \frac{f(x_k) - f(a)}{||f(x_k) -f(a)||} =  ||df|_a(v)|| \cdot  \lim_{k \to \infty} \frac{||x_k - a||}{||f(x_k) - f(a)||} \cdot \lim_{k \to \infty} \frac{f(x_k) - f(a)}{||x_k - a||} \\
= df|_a(v) \;,
\end{equation*}
where we have used that $||df|_a(v)|| \neq 0$ by the injectivity of $df|_a$.

\underline{$`` \supseteq"$:} Let $w \in Tan \big(f(A), f(a) \big)$, without loss of generality $w \neq 0$, $||w|| = 1$. Then there exists a sequence $f(A) \ni y_k \to f(a)$ such that $\frac{y_k - f(a)}{||y_k - f(a)||} \to w$. Since $f|_A$ is injective and its inverse continuous at $f(a)$, we have  $x_k = (f|_A)^{-1}(y_k) \to a$ for $k \to \infty$. After taking a subsequence if necessary we have $ \lim_{k \to \infty} \frac{x_k - a}{||x_k - a||} =: v \in Tan(A,a)$.  Then, using again that $||df|_a(v)|| \neq 0$, we obtain $$df|_a\big( \frac{v}{||df|_a(v)||}\big) = \lim_{k \to \infty} \frac{f(x_k) - f(a)}{||x_k - a||} \cdot \lim_{k \to \infty} \frac{||x_k - a||}{||f(x_k) - f(a)||} = \lim_{k \to \infty} \frac{y_k - f(a)}{||y_k - f(a)||} = w \;.$$ 
\end{proof}

Given an $n$-dimensional smooth manifold $M$ and $B \subseteq M$ we can now define the tangent cone of $B$ at $b \in B$ by 
\begin{equation*}
\begin{split}
Tan(B,b) := \{ v \in T_bM \; | \; &\textnormal{there exists a smooth chart } \varphi : U \to \R^n \textnormal{ around $b$} \\
&\textnormal{such that $d\varphi (v) \in Tan\big(\varphi(U \cap B), \varphi(b)\big)$}\;.
\end{split}
\end{equation*}
By Lemma \ref{LemChangeCoord} this does not depend on the choice of chart.

\begin{lemma} \label{LemDif}
Let $f : \R^n \to \R$ be a Lipschitz continuous function. Then $f$ is differentiable at $0$ with $df|_0 = A \in \R^n$ if, and only if, $$Tan\big( \mathrm{graph}(f), \underbrace{(f(0),0)}_{\in \R^{n+1}}\big) = \begin{pmatrix}
A_1 & \cdots & A_n \\
& \mathrm{id}_n &
\end{pmatrix} \cdot \R^n
\;.$$
\end{lemma}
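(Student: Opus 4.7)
The plan is to prove the two implications separately, with the forward direction being a direct computation and the backward direction being a compactness/contradiction argument relying crucially on the Lipschitz hypothesis.

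\textbf{Forward direction.} Assume $f$ is differentiable at $0$ with $df|_0 = A$, so that $f(x) - f(0) = A \cdot x + o(\|x\|)$ as $x \to 0$. Denote the graph of $A$ in $\R^{n+1}$ by $\Pi := \{(A \cdot v, v) : v \in \R^n\}$; this is exactly the column space described in the statement. To show $\Pi \subseteq Tan(\mathrm{graph}(f), (f(0),0))$, pick any $v \in \R^n \setminus \{0\}$ and approach $0$ along $x_k = t_k v$ with $t_k \searrow 0$. A direct substitution gives $(f(x_k), x_k) - (f(0), 0) = (t_k A \cdot v + o(t_k),\, t_k v)$, whose norm behaves like $t_k \sqrt{(A\cdot v)^2 + \|v\|^2}$, so the normalised difference quotient converges to $(A \cdot v, v)/\sqrt{(A\cdot v)^2 + \|v\|^2}$, which lies in $\Pi$. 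Scaling yields every element of $\Pi$. For the reverse inclusion, given any sequence $x_k \to 0$ in $\R^n \setminus \{0\}$, pass to a subsequence with $x_k/\|x_k\| \to v$; differentiability gives $(f(x_k)-f(0))/\|x_k\| = A \cdot v + o(1)$, and the same computation shows every limit direction lies in $\Pi$.

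\textbf{Backward direction.} Assume the tangent cone equals $\Pi$, and suppose towards a contradiction that $f$ is not differentiable at $0$ with differential $A$. Then there exist $\varepsilon > 0$ and a sequence $x_k \to 0$ in $\R^n \setminus \{0\}$ with
\begin{equation*}
\frac{|f(x_k) - f(0) - A \cdot x_k|}{\|x_k\|} \geq \varepsilon \quad \text{for all } k.
\end{equation*}
Here is where the Lipschitz hypothesis enters: the quotients $v_k := x_k/\|x_k\|$ lie on the unit sphere of $\R^n$, and $(f(x_k)-f(0))/\|x_k\|$ is bounded by the Lipschitz constant $L$. Passing to a subsequence, $v_k \to v$ with $\|v\| = 1$ and $(f(x_k)-f(0))/\|x_k\| \to c$ for some $c \in \R$. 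The corresponding normalised difference quotient in $\R^{n+1}$ then converges to $(c,v)/\sqrt{c^2+1}$, which by hypothesis must lie in $\Pi$, forcing $c = A \cdot v$. But then $|c - A \cdot v| = 0$, contradicting the assumed lower bound $\varepsilon$ in the limit. Hence $f$ is differentiable at $0$ with $df|_0 = A$.

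\textbf{Main obstacle.} The forward direction is a direct unwinding of definitions, so the substantive step is the backward implication. The subtlety is that the tangent cone only records limit \emph{directions} of difference quotients in $\R^{n+1}$, not the scalar ratios $(f(x_k)-f(0))/\|x_k\|$ themselves; extracting genuine differentiability from this directional information requires the Lipschitz bound to guarantee boundedness (and hence subsequential convergence) of those scalar ratios. Without Lipschitz continuity the tangent cone condition would not imply differentiability, so this is the point at which the hypothesis of the lemma is essential.
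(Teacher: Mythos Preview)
Your proof is correct and follows essentially the same strategy as the paper's. For the forward implication the paper is slightly more economical: it applies the earlier Lemma~\ref{LemChangeCoord} to the graphing map $x \mapsto (f(x),x)$, whose differential at $0$ is precisely the matrix in the statement, rather than verifying the two inclusions by hand as you do. For the backward implication both arguments extract convergent subsequences using the Lipschitz bound and then invoke the tangent cone hypothesis to pin down the limit; the paper phrases this as ``all subsequential limits of $(f(hy_0)-f(0))/h$ coincide, hence the limit exists and equals $A\cdot y_0$'', whereas you package it as a contradiction. Your contradiction formulation is arguably a little cleaner, since the paper's version concludes Fr\'echet differentiability from the existence of all directional derivatives with linear dependence on the direction, which tacitly uses the Lipschitz hypothesis once more (to compare $f(x_k)$ with $f(\|x_k\|v)$ where $v = \lim x_k/\|x_k\|$); your argument absorbs this step automatically.
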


\begin{proof}
The direction $``\implies"$ is provided by  applying Lemma \ref{LemChangeCoord} to the graphing function $ \R^n \ni x \mapsto (f(x),x) \in \R^{n+1}$. For the direction $``\impliedby"$ we consider for $y_0 \in \R^n$ with $||y_0||=1$ and $h>0$
\begin{equation} \label{SubSe}
\frac{\big(f(h y_0), h y_0\big) - \big( f(0),0 \big)}{h} = \frac{\big(f(h y_0), h y_0\big) - \big( f(0),0 \big)}{||\big(f(h y_0), h y_0\big) - \big( f(0),0 \big)||} \cdot \underbrace{\frac{||\big(f(h y_0), h y_0\big) - \big( f(0),0 \big)||}{h}}_{\leq L} \;,
\end{equation}
where $L>0$ is related to the Lipschitz constant of $f$. Since the factors on the right hand side are bounded, it follows that there exist convergent subsequences $h_n \to 0$. The left hand side shows that the limit must be of the form $\R^{n+1} \ni Y = Y^0 \rd_0 + y_0^i \rd_i$. The expression on the right makes manifest that $Y \in Tan\big( \mathrm{graph}(f), (f(0),0)\big)$. However, by assumption we must then have $Y^0 = A \cdot y_0$. This shows that all subsequences have the same limit and thus the limit $h \to 0$ in \eqref{SubSe} exists and equals
$$\lim_{h \to 0} \frac{\big(f(h y_0), h y_0\big) - \big( f(0),0 \big)}{h}  = (A \cdot y_0, y_0)\;.$$
The first component of this limit says that $f$ is differentiable at $0$ with derivative equal to $A$.
\end{proof}

\begin{lemma} \label{LemLip}
Let $M$ be a smooth $(d+1)$-dimensional manifold and $B \subseteq M$.  Let $p \in B$ and $\varphi_i : U_i \to \R^{d+1}$, $i = 1,2$ be two smooth charts of $M$ with $p \in U_i$ and such that $\varphi_i (U_i \cap B) = \mathrm{graph}(f_i) $ and $\varphi_i(p) = \big(f_i(0),0\big)$ for some Lipschitz continuous functions $f_i : \R^d \to \R$. Then $f_1$ is differentiable at $0$ if, and only if, $f_2$ is differentiable at $0$.
\end{lemma}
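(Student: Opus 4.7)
The plan is to combine Lemma \ref{LemChangeCoord} (which transports tangent cones under smooth diffeomorphisms with injective differential) with Lemma \ref{LemDif} (which characterises differentiability of a Lipschitz graphing function in terms of its tangent cone at the base point). The key point, which I will highlight below, is that the Lipschitz hypothesis on $f_2$ automatically ensures the requisite ``graph form'' of the transported tangent cone.

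By symmetry, it suffices to prove the direction ``$f_1$ differentiable at $0$ $\implies$ $f_2$ differentiable at $0$''. Assume $f_1$ is differentiable at $0$ with $df_1|_0 =: A_1 \in \R^d$. By Lemma \ref{LemDif}, the tangent cone
$$T_1 := Tan\bigl(\mathrm{graph}(f_1), (f_1(0),0)\bigr)$$
is the $d$-dimensional linear subspace given by the graph of the linear map $A_1$. Let $\psi := \varphi_2 \circ \varphi_1^{-1}$ be the transition map, which is a smooth diffeomorphism between open subsets of $\R^{d+1}$. Applying Lemma \ref{LemChangeCoord} to $\psi$ at $\varphi_1(p)$, with $A := \varphi_1(B \cap U_1 \cap U_2)$ (noting that $\psi$ is a diffeomorphism, hence injective, with smooth inverse and injective differential), and using that tangent cones are local, one obtains
$$d\psi\bigr|_{\varphi_1(p)}(T_1) \;=\; T_2 \;:=\; Tan\bigl(\mathrm{graph}(f_2), (f_2(0),0)\bigr) \;.$$
Since $d\psi|_{\varphi_1(p)}$ is a linear isomorphism of $\R^{d+1}$, the set $T_2$ is again a $d$-dimensional linear subspace of $\R^{d+1}$.

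The only remaining step, and the crux of the argument, is to show that $T_2$ has the particular graphical form required by Lemma \ref{LemDif} -- that is, that it is the graph of a linear map $\R^d \to \R$ rather than an arbitrary $d$-dimensional subspace. Here one uses that $f_2$ is Lipschitz with some constant $L_2$: every element of $T_2$ arises as the limit of normalised difference quotients $\bigl(f_2(y_k) - f_2(0),\, y_k\bigr)/\bigl\|\bigl(f_2(y_k) - f_2(0),\, y_k\bigr)\bigr\|$ with $y_k \to 0$, and the bound $|f_2(y_k) - f_2(0)| \leq L_2 \|y_k\|$ passes to the limit to give
$$T_2 \;\subseteq\; \bigl\{(a,b) \in \R \times \R^d \;\big|\; |a| \leq L_2 \|b\|\bigr\} \;.$$
In particular, no nonzero vector in $T_2$ has vanishing $\R^d$-component, so the projection $(a,b) \mapsto b$ restricts to an injective, hence (by dimension count) bijective, linear map from $T_2$ onto $\R^d$. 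Consequently, $T_2$ equals the graph of a unique linear map $A_2 : \R^d \to \R$, and Lemma \ref{LemDif} then yields that $f_2$ is differentiable at $0$ with $df_2|_0 = A_2$, concluding the plan.
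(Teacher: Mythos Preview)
Your proof is correct and follows essentially the same route as the paper: both apply Lemma~\ref{LemDif} to pass from differentiability of $f_1$ to the tangent cone being a $d$-plane, transport this plane through the transition diffeomorphism via Lemma~\ref{LemChangeCoord}, and then invoke the converse direction of Lemma~\ref{LemDif}. The only difference lies in how the transported $d$-plane $T_2$ is shown to have graph form over $\R^d$: the paper constructs explicit sequences along each coordinate axis and uses the Lipschitz bound to extract limits $(A_i, e_i) \in T_2$, whereas you argue more directly that the Lipschitz bound forces $T_2 \subseteq \{(a,b) : |a| \leq L_2\|b\|\}$, so the projection $(a,b) \mapsto b$ is injective on $T_2$ and hence bijective by dimension count. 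Your variant is slightly cleaner.
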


\begin{proof}
Assume $f_1$ is differentiable at $0$. Then $F_1 : \R^d \to \R^{d+1}$, given by $F_1(x) = (f_1(x), x)$, satisfies the assumptions of Lemma \ref{LemChangeCoord} at $0$ and thus we obtain $Tan \big(\mathrm{graph}(f_1), (f_1(0), 0)\big) = dF_1|_0 (\R^d)$.  We now consider the change of charts $\varphi_2 \circ \varphi_1^{-1}$ and apply again Lemma \ref{LemChangeCoord} to obtain that $Tan\big(\mathrm{graph}(f_2) , (f_2(0),0)\big)$ is a $d$-dimensional subspace of $T_{(f_2(0),0)}\R^{d+1}$.

For the sequence of points $y_n := \big(f_2(\frac{1}{n}, 0, \ldots, 0), \frac{1}{n}, 0, \ldots, 0\big) \in \gr(f_2)$ the difference quotient $$\frac{y_n - (f_2(0), 0)}{\nicefrac{1}{n}} = \frac{y_n - (f_2(0), 0)}{||y_n - (f_2(0), 0)||} \cdot \underbrace{ \frac{||f_2(\frac{1}{n}, 0, \ldots, 0), \frac{1}{n}, 0, \ldots, 0) - (f_2(0), 0) ||}{\nicefrac{1}{n}}}_{\leq L}$$ is bounded, where $L >0$ is related to the Lipschitz constant of $f_2$, and thus admits a subsequence $y_{n_k}$ such that $$\frac{y_{n_k} - (f_2(0),0)}{\nicefrac{1}{n}} \to (A_1, 1, 0, \ldots, 0) \in Tan\big(\gr(f_2), (f_2(0),0) \big) \;.$$ Similarly we obtain that there exist $A_2, \ldots, A_d \in \R$ such that $(A_i, e_i) \in Tan\big(\gr(f_2), (f_2(0), 0)\big)$, where $e_i \in \R^d$ is the unit vector in the $i$-th coordinate direction. However, we already know that $Tan \big(\gr(f_2), (f_2(0), 0) \big)$ is a $d$-dimensional subspace and so it follows that $$Tan\big( \mathrm{graph}(f_2), (f_2(0),0)\big) = \begin{pmatrix}
A_1 & \cdots & A_d \\
& \mathrm{id}_d &
\end{pmatrix} \cdot \R^d
\;.$$
It then follows from Lemma \ref{LemDif} that $f_2$ is differentiable at $0$.
\end{proof}

\bibliographystyle{acm}
\bibliography{Bibly}

\end{document}